\newtheorem{thm}{Theorem}
\newtheorem{cor}{Corollary}
\newtheorem{lem}{Lemma}
\newtheorem{defn}{Definition}
\def\squarebox#1{\hbox to #1{\hfill\vbox to #1{\vfill}}}
\newcommand{\inout}         {{- \hspace{-0.3mm} \vbox{\hrule\hbox{\vrule\squarebox{.667em}\vrule}\hrule}\hspace{-0.3mm} -}}
\begin{document}
\title{Rate-Distortion-Based Physical Layer Secrecy with Applications to Multimode Fiber}

\author{ Eva~C.~Song, ~\IEEEmembership{Student~Member,~IEEE,}
	Emina~Soljanin,~\IEEEmembership{Fellow,~IEEE,}
        Paul~Cuff, ~\IEEEmembership{Member,~IEEE,}
        H.~Vincent~Poor,~\IEEEmembership{Fellow,~IEEE,}
        and~ Kyle~Guan, ~\IEEEmembership{Member,~IEEE}
\thanks{E. C. Song, P. Cuff and H. V. Poor are with the Department
of Electrical Engineering, Princeton University, Princeton,
NJ 08544, USA e-mail: \{csong,cuff, poor\}@princeton.edu.}
\thanks{E. Soljanin and K. Guan are with Bell Labs, Alcatel-Lucent, Murray Hill,
NJ 07974, USA e-mail: \{emina,kyle.guan\}@alcatel-lucent.com}
}

\markboth{IEEE Transactions on Communications}%
{Submitted paper}

\maketitle

\begin{abstract}
Optical networks are vulnerable to physical layer attacks; wiretappers can improperly receive messages intended for legitimate recipients. Multimode fiber (MMF) transmission can be modeled via a broadcast channel in which both the legitimate receiver's and wiretapper's channels are multiple-input-multiple-output complex Gaussian channels. Our work considers the theoretical aspect of this security problem in the domain of a broadcast channel. Source-channel coding analyses based on the use of distortion as the metric for secrecy are developed. Alice has a source sequence to be encoded and transmitted over this broadcast channel so that the legitimate user Bob can reliably decode while forcing the distortion of the wiretapper, or eavesdropper, Eve's estimate as high as possible. Tradeoffs between transmission rate and distortion under two extreme scenarios are examined: the best case where Eve has only her channel output and the worst case where she also knows the past realization of the source. It is shown that under the best case, an operationally separate source-channel coding scheme guarantees maximum distortion at the same rate as needed for reliable transmission. Theoretical bounds are given, and particularized for MMF. Numerical results showing the rate distortion tradeoff are presented and compared with corresponding results for the perfect secrecy case.
\end{abstract}

\begin{IEEEkeywords}
rate-distortion, MIMO, optical fiber communication, source-channel coding, secrecy, SDM, MMF
\end{IEEEkeywords}

\IEEEpeerreviewmaketitle

\section{Introduction}
Single mode fiber systems are believed to have reached their capacity limits. In particular, techniques such as wavelength-division multiplexing (WDM) and polarization-division multiplexing (PDM) have been heavily exploited in the past few years, leaving little room for further improvement in capacity \cite{winzer-express}. Space-division multiplexing (SDM) is a promising solution for meeting the growing capacity demands of optical communication networks. One way of realizing SDM is via the use of multimode fiber (MMF). While multimode transmission provides greater capacity, the security of such systems can be an issue because a wiretapper can eavesdrop upon MMF communication by simply bending the fiber \cite{ECOC} . 
MMF is a multiple-input-multiple-output (MIMO) system \cite{winzer-express} that captures the charateristics of crosstalk among different modes. The secrecy capacity of a Gaussian MIMO broadcast channel was studied in \cite{mimo-gaussian}, but the result cannot be applied directly to MMF because the channel is not the same. The secrecy capacity of this channel was studied in \cite{ECOC} where it is shown that the channel conditions required for perfect secrecy are quite demanding.  


The concepts of ``perfect secrecy", ``partial secrecy", ``strong secrecy", ``weak secrecy", ``equivocation" and ``distortion" will be applied repeatedly in this paper. We shall now briefly summarize the relationships among those terms. Please note that, even though ``perfect secrecy" is a non-asymptotic concept, here for convenience, we refer to ``perfect secrecy" in the asymptotic sense, i.e. the information leakage is arbitrarily small as the blocklength goes to infinity. Information theoretic secrecy typically considers one of the two regimes, perfect secrecy or partial secrecy. Perfect secrecy essentially requires no information leakage to the eavesdropper. In the regime of partial secrecy, one must quantify the degree of secrecy obtained. Equivocation rate is a metric found in the literature which measures how much of the signal is leaked to the eavesdropper regardless of whether the eavesdropper can use the leaked information in a constructive way. Another approach, taken in this work, is to use distortion to measure the difference between the original content and eavesdropper's estimate of it. 

The notions of strong and weak secrecy are both referring to the perfect secrecy regime, and equivocation is usually involved in the analysis. Under either strong or weak secrecy, distortion is at the maximum, because negligible information is leaked to the eavesdropper that would allow her to make a better estimate. However, implication in the other direction does not hold in general. In order to keep the distortion at a maximum, neither strong or weak secrecy is necessarily required. 

This work focuses on the partial secrecy regime. It should be noted that equivocation and distortion are not two independent measures. As pointed out in recent work \cite{schieler-itt}, equivocation becomes a special case of distortion when causal information is revealed to the eavesdropper and the distortion is measured by log loss. It can be seen from our results herein that high distortion can be achieved even if all the past information of the source is given to the eavesdropper. Furthermore, partial secrecy comes at a much lower cost than perfect secrecy. 

Distortion was also used in \cite{yamamoto} and \cite{cuff-globecom} as a metric for secrecy in the context of a noiseless network with secret key sharing. In this work, we are concerned with physical layer secrecy in MMF systems. This prompts us to formulate the problem as a source-channel coding problem along the lines studied in a general setting in \cite{allerton}, some results of which can be directly applied to MMF systems.

The rest of the paper is structured as follows. In Section \ref{system-model}, we introduce the system model in two ways: the general source-channel coding model for theoretical derivation; and the particular MMF channel model for application. In Section \ref{bounds}, we provide theoretical bounds with source-channel coding for general broadcast channels. This is our main theoretical contribution of the paper. In Section \ref{main-results}, we apply the general results from Section \ref{bounds} to the MMF model when the channel is time-invariant and discuss the secrecy outage in the case of a random channel in which the channel state information (CSI) is not available to the transmitter. In Section \ref{numerical}, we provide numerical evaluation to the MMF source-channel model under Hamming distortion. Finally, in Section \ref{conclusion}, we conclude the paper and discuss open problems from this work.

\section{System Model} \label{system-model}
We first introduce some notation that will be used throughout this paper. A sequence $X_1,..., X_n$ is denoted by $X^n$. Limit taken with respect to ``$n\rightarrow \infty$" is abbreviated as ``$\rightarrow_n$". In the case that $X$ is a random variable, $x$ is used to denote a realization and $\mathcal{X}$ is used to denote the support of that random variable. $\mathbb{R}$ and $\mathbb{C}$ are reserved to denote the real field and complex field, respectively. A complex Gaussian distribution is denoted by $\mathcal{CN}(\mu, \Sigma, C)$, where $\mu$ is the mean, $\Sigma$ is the covariance matrix, and $C$ is the relation matrix. A Markov relation is denoted by the symbol $\inout$. The total variation distance between two distributions $P$ and $Q$ are denoted by $||P-Q||_{TV}$. For a distortion measure $d: \mathcal{S} \times \mathcal{T}\mapsto \mathbb{R}^+$, the distortion between two sequences is defined to be the per-letter average distortion $d(s^k,t^k)=\frac1k\sum_{i=1}^k d(s_i,t_i)$. The maximum distortion $\Delta$, the average distortion achieved by guesses based only on the prior distribution of the source, is defined as
\begin{eqnarray} \label{max-distortion}
\Delta &\triangleq& \min_{t}\mathbb{E}[d(S,t)].
\end{eqnarray}

\subsection{Source-Channel Coding Model for General Broadcast Channel} \label{general-bcc}
A source node (Alice) has an independent and identically distributed (i.i.d.) sequence $S^k$ that she intends to transmit over a memoryless broadcast channel $P_{YZ|X}$ such that a legitimate user (Bob) can reliably decode the source sequence, while keeping the distortion between an eavesdropper (Eve) and Alice as high as possible. The source sequence $S^k$ is mapped to the channel input sequence $X^n$ through a source-channel encoder. Upon receiving $Y^n$, Bob makes an estimate $\hat S^k$ of the original source sequence $S^k$. Let $f_{k,n}: \mathcal{S}^k\mapsto \mathcal{X}^n$ be a source-channel encoder and $g_{k,n}: \mathcal{Y}^n\mapsto \mathcal{S}^k$ be the corresponding decoder. 
For almost lossless reconstruction, we require the probability that Bob's reconstruction differs from the original goes to zero asymptotically with the source blocklength. That is,  
$$\mathbb{P}\left[S^k\neq \hat S^k\right]\rightarrow_{k} 0.$$
Similarly, Eve also makes an estimate $T^k$ of $S^k$ upon receiving $Z^n$ and some other side information. We will examine two extreme cases based on the amount of side information Eve has.

\begin{figure}[htbp]
  \centering
  \includegraphics[width=9.6 cm]{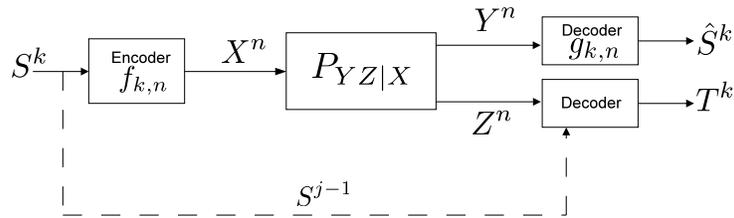}
\caption{Source-channel coding model with an i.i.d. source and broadcast channel. }
\label{sctotal}
\end{figure}

\subsubsection{No causal information available to Eve}
The case in which Eve has only her own channel output but no side information about the source corresponds to the best scenario for the legitimate users of the network, Alice and Bob. With the channel output alone, Eve has very limited resources in hand to make the estimate. 
Let $t^k$ be Eve's estimate of the original source sequence $s^k$. The system model is shown in Fig. \ref{sctotal}; however, the dashed line represents additional information that is not available to the eavesdropper in this first scenario. We use the lower case $t^k(z^n)$ to denote Eve's deterministic estimation functions of her observation $z^n$ and the capital letter $T^k=t^k(Z^n)$ to denote the function of  the random sequence $Z^n$. The following definitions in this section are for time-invariant channels. 

\begin{defn}
For a given distortion function $d(s,t)$, a rate distortion pair $(R,D)$ is achievable if there exists a sequence of encoder/decoder pairs $f_{k,n}$ and $g_{k,n}$ such that
$$\frac{k}{n}=R,$$
$$\lim_{n\rightarrow \infty} \mathbb{P}\left[S^k\neq\hat{S}^k\right]=0,$$
and
$$\liminf_{n\rightarrow \infty} \min_{t^k(z^n)}\mathbb{E}\left[d(S^k,t^k(Z^n))\right]\geq D.$$
\end{defn}
Note that the rate-distortion pair $(R,D)$ captures the tradeoff between Bob's rate for reliable transmission and Eve's distortion, which is different from rate-distortion theory in the traditional sense.

\subsubsection{With causal information available to Eve}
On the other hand, we are also interested in the case in which, at each time instance $j$, Eve gets to see the past realization of the source sequence ${S}^{j-1}$. This would be the worst scenario for the legitimate users. The definition for an achievable rate distortion pair $(R,D)$ is similar to Definition 1 given in the previous subsection except the last condition is replaced by
$$\liminf_{n\rightarrow \infty} \min_{\{t_j(z^n,s^{j-1})\}_{j=1}^k}\mathbb{E} \left[ \frac1k \sum_{j=1}^k d(S_j,t_j(Z^n,S^{j-1}) \right] \geq D.$$
The system model is shown in Fig. \ref{sctotal} with the dashed line representing the availability of the causal information. 

\subsection{MMF Channel Model} \label{mmf-model-section}
Now we particularize the general broadcast channel described above to an MMF broadcast channel as shown in Fig. \ref{mmf-channelmodel}.  An $M$-mode MMF is modeled as a memoryless MIMO channel with input $X$ an $M$-dimensional complex vector. Here $M$ is a positive integer.
\begin{figure}[htbp] 
  \centering
  \includegraphics[width=6 cm]{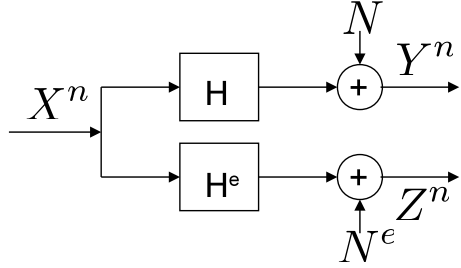}
\caption{MMF channel model }
\label{mmf-channelmodel}
\end{figure}

Unlike wireless MIMO which has a total power constraint, MMF channels have the following per mode power constraint averaged over $n$ uses of the channel:
\begin{equation}
\frac1n\sum_{i=1}^n\left|{ X_i^{(m)}}\right|^2\leq 1 \quad \text{for all modes} ~ m\in[1:M]. \label{power}
\end{equation}
More generally (as in \cite{mimo-gaussian}), we will consider a power constraint of the form
\begin{eqnarray}
\frac1n \sum_{i=1}^n{X}_i {X}_i^{\dagger}&\preceq& Q \label{powergen},
\end{eqnarray}
 where $Q\in \{A\in \mathcal{H}^{M\times M}: A\succeq 0, A_{ii}=1\}$ and $\mathcal{H}$ denotes the set of Hermitian matrices. One element in this set is the identity matrix $I$ (constraint (\ref{power})).  We will focus on the case that $Q=I$ for simplicity. A detailed discussion of the MMF channel model can be found in \cite{winzer-express}.

\subsubsection{The Legitimate User Communications Model}
The channel between Alice and Bob $P_{Y|X}$ is complex, Gaussian, MIMO, with input $X\in\mathbb{C}^M$ as described above, and output
$Y\in\mathbb{C}^M$ given by
\begin{eqnarray}
{Y}&=&H X+N, \label{bob}
\end{eqnarray}
where $ N\sim\mathcal{CN}(0,\sigma_N^2I,0)$ is $M$-dimensional, uncorrelated, zero-mean, complex, Gaussian noise and $H$ is an $M\times M$ complex matrix. Bob's channel matrix $H$ is of the form
\begin{eqnarray}
H&=&\sqrt{E_0L}\Psi, \label{h}
\end{eqnarray}
where $\Psi \in \mathbb{C}^{M\times M}$ is unitary and $E_0L$ is a constant scalar that measures the average power of the channel. We refer to
${E_0L}/{\sigma_N^2}$  as the SNR of the channel. Matrix $\Psi$, the unitary factor of the channel $H$, describes the modal crosstalk \cite{winzer-express}.


\subsubsection{The Eavesdropper Communications Model}
The channel between Alice and Eve $P_{Z|X}$ is also complex, Gaussian, MIMO, with input $X\in\mathbb{C}^M$ as described above, and output
$Z\in\mathbb{C}^M$ given by
\begin{eqnarray}
Z&=&H^e X+{N^e}, \label{eve}
\end{eqnarray}
where ${N^e}\sim\mathcal{CN}(0,\sigma_{N^e}^2 I,0)$ is $M$-dimensional uncorrelated, zero-mean, complex, Gaussian noise, and $H^e$ is an $M\times M$ complex matrix. Eve's channel matrix $H^e$ is of the form
\begin{eqnarray}
H^e&=&\sqrt{E_0L^e}\sqrt{\Phi}\Psi^e, \label{he}
\end{eqnarray}
where $\Psi^e\in \mathbb{C}^{M\times M}$ is unitary, $\Phi$ is diagonal with positive entries, and $E_0L^e$ is the average power of Eve's channel. Note that Eve has a different signal to noise ratio $\text{SNR}^e={E_0L^e}/{\sigma_{N^e}^2}$. The diagonal component $\Phi$ of the channel matrix $H^e$ corresponds to the mode-dependent loss (MDL) as introduced in \cite{winzer-express}. 


\section{Theoretical Bounds}\label{bounds}

In this section, we focus on the general broadcast channel introduced in Section \ref{general-bcc} only. We first make some general observations about the communication between Alice and Bob, as well as the communication between Alice and Eve. If Eve is not present, Alice and Bob can communicate losslessly at any rate lower than $R_0\triangleq\frac{\max_{X}I(X;Y)}{H(S)}$ because separate source-channel coding is optimal for point-to-point communication. Ideally, we want to force maximum distortion $\Delta$ upon Eve. But higher distortion to Eve may come at the price of a lower communication rate to Bob. The technical content of this section is organized as follows: the rate-distortion region for the ``no causal information" case is first given in Theorem \ref{nocausal}; to prepare for the achievability proof of Theorem \ref{nocausal}, an operational separation scheme is discussed; also, an achievable rate-distortion region is given in Theorem \ref{rdcausal} for the causal case under Hamming distortion; and finally, an example with a binary symmetric channel and Hamming distortion is provided for illustration.  

Before starting the new results, we shall provide a recap of what have been done in the literature regarding this problem and what our main advances are in this work. For noiseless channels, the source coding problems of both the no-causal-information and the causal-information cases were solved in \cite{schieler} and \cite{cuff-globecom}, respectively. There, secrecy was obtained by using a secret key shared between Alice and Bob. As for physical layer secrecy of a memoryless broadcast channel, the result for transmitting two messages, one confidential and one public, from Csisz\'{a}r and K\"{o}rner \cite{csiszar} have been known for many decades. In their work, weak secrecy were considered. This result was strengthened in \cite{raf} by considering strong secrecy. The same rate region was obtained in \cite{raf}, however the metric for secrecy is stronger. In our work, the source-channel coding schemes we propose operationally separate source and channel coding that require dividing the bit sequence produced by source coding into two messages which are then processed by the channel coding. The channel coding part functions in a way that is similar to \cite{csiszar} or \cite{raf}, except that the public message in their work is not required to be decoded in our case, and we refer to that message as the ``non-confidential" message. This type of channel coding setting was also used in \cite{allerton} for the causal-information case and it is shown that only weak secrecy is required to combine the source and channel coding. In this work, we will connect the source coding (from \cite{schieler}) and channel coding for the no-causal-information case. Unlike the causal-information case, strong secrecy from the channel is needed. This will require modifying some of the settings from \cite{raf}. We also extend the result for the causal-information case from \cite{allerton} to include the rate-distortion tradeoff.

We now state the rate-distortion result for general source-channel coding with an i.i.d. source sequence and a discrete memoryless broadcast channel $P_{YZ|X}$ when \textbf{no causal information} is available to Eve. In the following theorem, we will see that the source sequence can be delivered almost losslessly to Bob at a rate arbitarily close to $R_0$ while the distortion to Eve is kept at $\Delta$, as long as the secrecy capacity is positive. 
\begin{thm}\label{nocausal}
For an i.i.d. source sequence $S^k$ and memoryless broadcast channel $P_{YZ|X}$, if there exists $W\inout X\inout YZ$ such that $I(W;Y)-I(W;Z)>0$, then $(R,D)$ is achievable if and only if
\begin{align}
R&<\frac{\max_{X} I(X; Y)}{H(S)}, \label{ineq_r}\\
D&\leq \Delta, \label{ineq_d}
\end{align}
where $\Delta$ was defined in $(\ref{max-distortion})$.
\end{thm}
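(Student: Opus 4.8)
The plan is to prove the two directions separately, the converse being routine and the achievability carrying the real content. For the converse, the distortion bound $D\le\Delta$ is immediate: Eve may discard $Z^n$ and output in every coordinate the constant $t^\ast$ attaining $\Delta=\min_t\mathbb{E}[d(S,t)]$, so that $\min_{t^k(z^n)}\mathbb{E}[d(S^k,t^k(Z^n))]\le\frac1k\sum_{i=1}^k\mathbb{E}[d(S_i,t^\ast)]=\Delta$ for every scheme, whence the $\liminf$ cannot exceed $\Delta$. The rate bound $R<R_0$ is the point-to-point source--channel separation converse for the Alice--Bob link: Fano's inequality applied to the almost-lossless condition gives $H(S^k\mid Y^n)\le k\epsilon_k$ with $\epsilon_k\rightarrow_n 0$, so $kH(S)=H(S^k)\le I(S^k;Y^n)+k\epsilon_k$, and data processing through $S^k\inout X^n\inout Y^n$ together with the memoryless single-letterization $I(X^n;Y^n)\le n\max_X I(X;Y)$ yields $RH(S)\le\max_X I(X;Y)$ in the limit.

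For achievability I would use operational source--channel separation. On the source side, borrow the lossless code of \cite{schieler}: randomly bin the $P_S$-typical sequences into a public index $M_2\in[2^{nR_2}]$ and a confidential index $M_1\in[2^{nR_1}]$ with $R_1+R_2=RH(S)$, so that $(M_1,M_2)$ determines $S^k$ and Bob reconstructs almost losslessly. On the channel side, use a strong-secrecy wiretap code adapted from \cite{raf}: carry $M_1$ as the confidential message at any fixed rate $R_1\in(0,C_s)$, where $C_s\triangleq\max_W[I(W;Y)-I(W;Z)]>0$ by hypothesis, and carry $M_2$ as a non-confidential message that Bob decodes but Eve need not, with total rate $R_1+R_2=RH(S)$ deliverable to Bob whenever $RH(S)<\max_X I(X;Y)$, i.e. $R<R_0$. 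The modifications of \cite{raf} I need are exactly (i) that the public layer is not required to be reliable at Eve, freeing its rate from Eve's decoding constraint, and (ii) that strong secrecy of $M_1$ persists even when $M_2$ is revealed, namely $I(M_1;Z^n,M_2)\rightarrow_n 0$.

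The heart of the argument is forcing Eve's optimal distortion to $\Delta$. Since handing Eve the side information $M_2$ only lowers her distortion, it suffices to lower-bound the distortion of an enhanced eavesdropper who observes $(Z^n,M_2)$; writing her optimal estimator coordinate-wise, this equals $\mathbb{E}_{Z^n,M_2}\bigl[\tfrac1k\sum_i\min_t\mathbb{E}[d(S_i,t)\mid Z^n,M_2]\bigr]$, so it is enough to show the per-letter posteriors $P_{S_i\mid Z^n,M_2}$ are close to the prior $P_S$. I establish this in two moves. First, the strengthened secrecy guarantee and Pinsker give $\|P_{M_1 Z^n M_2}-P_{M_1}P_{Z^n M_2}\|_{TV}\rightarrow_n 0$, so on average over $(Z^n,M_2)$ the confidential index is asymptotically independent of the enhanced observation; since $S^k$ is a function of $(M_1,M_2)$, this collapses $P_{S^k\mid Z^n,M_2}$ to the within-bin distribution obtained from $M_2$ with $M_1$ uniform, i.e. to the uniform distribution over bin $M_2$. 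Second, and this is the crux, random binning makes each bin a collection of $\approx 2^{nR_1}$ conditionally i.i.d.\ $P_S$ sequences, so the empirical coordinate marginal concentrates, $\|P_{S_i\mid M_2}-P_S\|_{TV}=O(2^{-nR_1/2})\rightarrow_n 0$ for every fixed $R_1>0$. Combining the two moves drives each coordinate's best-guess distortion to $\Delta$, hence the average to $\Delta$.

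The main obstacle I anticipate is precisely this coupling between the channel-coding secrecy guarantee and the source-side posterior: passing from the single mutual-information statement $I(M_1;Z^n,M_2)\rightarrow_n 0$ to the \emph{per-letter} closeness of $P_{S_i\mid Z^n,M_2}$ to $P_S$ requires controlling simultaneously the rare atypical bins, the codebook realizations on which the within-bin marginals deviate, and the joint dependence that ties $M_1$ to the recoverable $M_2$. It is here that \emph{strong} secrecy is indispensable: weak secrecy $\tfrac1n I(M_1;Z^n)\rightarrow_n 0$ would not furnish the total-variation independence needed to replace the enhanced observation by $M_2$ in the posterior, which is why, unlike the causal-information case of \cite{allerton}, the no-causal-information case forces the stronger channel guarantee and the corresponding strengthening of \cite{raf}.
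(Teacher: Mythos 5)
Your overall architecture coincides with the paper's: operational source--channel separation with a public bin index and a confidential within-bin index, a wiretap code modified from \cite{raf} so that the public layer need not be decoded by Eve and secrecy holds conditioned on the public message, and a Pinsker-type reduction showing that an eavesdropper observing $(Z^n,M_p)$ does essentially no better than one observing $M_p$ alone --- the paper does exactly this, per letter, in Appendix \ref{detail} via $I(S_i;Z^n|M_p)\le I(M_s;Z^n|M_p)+I(S_i;Z^n|M_sM_p)$ followed by the same estimator-swapping chain you describe, and your diagnosis that strong rather than weak secrecy is indispensable here is precisely the paper's point. Your converse and the rate bookkeeping with a vanishing confidential rate also match the paper in substance.

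The genuine gap is in your source code and in the interface it must present to the channel code. You propose to ``randomly bin'' the typical sequences, but the channel-coding results you invoke (Definitions \ref{def-weak} and \ref{def-strong}, Theorems \ref{region} and \ref{thmstrong}, and the resolvability-based construction imported from \cite{raf}) are stated only for message pairs satisfying the exact uniformity condition $(\ref{uniform})$, $P_{M_s|M_p=m_p}(m_s)=2^{-nR_s}$, and your own crux step silently assumes the same thing when you claim the posterior ``collapses to the uniform distribution over bin $M_2$'' and that a bin behaves like $2^{nR_1}$ conditionally i.i.d.\ $P_S$ draws. For a non-uniform i.i.d.\ source, random binning delivers neither: within a bin the posterior weights are proportional to $P_{S^k}(s^k)$, which varies across the typical set by factors exponential in $n$ times the typicality slack, and you never fix the ordering of that slack against $R_1$ --- which your rate computation forces to vanish --- so a bin of size $2^{nR_1}$ can be dominated by its few most probable members and the claimed $\|P_{S_i|M_2}-P_S\|_{TV}=O(2^{-nR_1/2})$ concentration is unjustified as written. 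The paper's construction is engineered exactly to close this hole: the typical sequences are partitioned into equal-size bins each containing sequences of a \emph{single type}, so all bin members have identical probability, $(\ref{uniform})$ holds exactly (the polynomially many residual bins are handled by declaring an error and inserting a dummy uniform $M_s$, per the paper's footnote), and the within-bin maximal-distortion statement is then imported from \cite{schieler} rather than re-derived. A second, smaller slip: carrying $M_1$ ``at any fixed rate $R_1\in(0,C_s)$'' while delivering total rate up to $\max_X I(X;Y)$ overstates the region of Theorem \ref{region}, whose sum rate is in general penalized at positive confidential rate; as in the paper, you must let the confidential rate tend to zero and invoke continuity of the boundary $R_p(R_s)$ at $R_s=0$, which your construction does permit.
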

\textbf{Remark}: The requirement $I(W;Y)-I(W;Z)>0$ implies the existence of a secure channel with a positive rate, i.e. the eavesdropper's channel is not less noisy than the intended receiver's channel. So instead of demanding a high secure transmission rate with perfect secrecy to accommodate the description of the source, we need only to ensure the existence of a secure channel with positive rate. This will suffice to ensure the eavesdropper's distortion is maximal.

The converse is straightforward. Each of the inequalities $(\ref{ineq_r})$ and $(\ref{ineq_d})$ is true individually for any channel and source, $(\ref{ineq_r})$ by channel capacity coupled by optimality of source-channel separation, and $(\ref{ineq_d})$ by definition.
 
The idea for achievability is to operationally separate the source and channel coding (see Fig. \ref{separate}). The source encoder compresses the source and splits the resulting message into a confidential message and a non-confidential message. A channel encoder is concatenated digitally with the source encoder so that the channel delivers both the confidential and non-confidential messages reliably to Bob and keeps the confidential message secret from Eve, as in \cite{csiszar}. The overall source-channel coding rate will have the following form: $R=\frac{k}{n}=\frac{k}{\log|\mathcal{M}|}\cdot\frac{\log|\mathcal{M}|}{n}=\frac{R_{ch}}{R_{src}}$, where $|\mathcal{M}|$ is the total cardinality of the confidential and the non-confidential messages; $R_{ch}$ and $R_{src}$ are the channel coding and source coding rates, respectively. 

Let us look at two models in the following subsections that will help us establish the platform for showing the achievability of Theorem \ref{nocausal}.

\begin{figure}[htbp]
  \centering
  \includegraphics[width=12 cm]{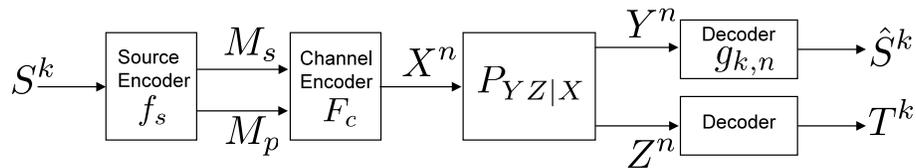}
\caption{Operational separate source-channel coding: the confidential and non-confidential messages satisfy $M_s\in[1:2^{kR_s'}=2^{nR_s}]$ and $M_p\in[1:2^{kR_p'}=2^{nR_p}]$}
\label{separate}
\end{figure}

\subsection{Channel Coding and Strong Secrecy} \label{channel}
Consider a memoryless broadcast channel $P_{YZ|X}$ and a communication system with a confidential message $M_s$ and a non-confidential message $M_p$ that must allow the intended receiver to decode both $M_s$ and $M_p$ while keeping the eavesdropper from learning anything about $M_s$. Problems like this were first studied by Csisz\'{a}r and K\"{o}rner \cite{csiszar} in 1978, as an extension of Wyner's work in \cite{wyner}. However, their model and our model differ in that the second receiver in their setting is required to decode the public message $M_p$. The mathematical formulation and result of our channel model is stated below.
We focus on the message pairs $(M_s, M_p)$ whose distribution satisfies the following:
\begin{eqnarray}
P_{M_s|M_p=m_p}(m_s)&=&2^{-nR_s} \label{uniform}
\end{eqnarray}
for all $(m_s,m_p)$. Later we will show a source encoder can always prepare the input messages to the channel of this form.

\begin{defn}
A $(R_s, R_p, n)$ channel code consists of a channel encoder $F_c$ (possibly stochastic) and a channel decoder $g_c$ such that
$$F_c: \mathcal{M}_s\times\mathcal{M}_p\mapsto \mathcal{X}^n$$
and
$$g_c: \mathcal{Y}^n\mapsto\mathcal{M}_s\times\mathcal{M}_p$$
where $|\mathcal{M}_s|=2^{nR_s}$ and $|\mathcal{M}_p|=2^{nR_p}$.
\end{defn}

\begin{defn} \label{def-weak}
The rate pair $(R_s, R_p)$ is achievable under weak secrecy if for all $(M_s, M_p)$ satisfying $(\ref{uniform})$, there exists a sequence of $(R_s, R_p, n)$ channel codes such that
$$\lim_{n\rightarrow \infty}\mathbb{P}\left[(M_s,M_p)\neq(\hat{M}_s,\hat{M}_p)\right]=0$$
and
$$\lim_{n\rightarrow \infty}\frac1n I(M_s;Z^n|M_p)=0.$$
\end{defn}
Note that because the eavesdropper may completely or partially decode $M_p$, the secrecy requirement is modified accordingly to consider $I(M_s;Z^n|M_p)$ instead of $I(M_s;Z^n)$. To guarantee true secrecy of $M_s$, we want to make sure that even if $M_p$ is given to the eavesdropper, there is no information leakage of $M_s$, because $I(M_s;Z^n|M_p)=I(M_s;Z^n,M_p)$ if $M_s$ and $M_p$ are independent.

\begin{thm}[Theorem 3 in \cite{allerton}] \label{region}
A rate pair $(R_s, R_p)$ is achievable under weak secrecy if
\begin{eqnarray}
R_s&\leq& I(W;Y|V)-I(W;Z|V),\\
R_p&\leq& I(V;Y)
\end{eqnarray}
for some $V\inout W \inout X\inout YZ$.
\end{thm}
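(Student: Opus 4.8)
The plan is to prove achievability by superposition coding layered with Wyner-style randomization, adapting the Csisz\'ar--K\"orner construction \cite{csiszar} to the present setting in which Eve carries no decoding requirement on $M_p$. I would split the transmission into three indices: the non-confidential message $M_p$ carried by a ``cloud center'' $V^n$, the confidential message $M_s$, and a local randomization index $M_s'\in[1:2^{nR_s'}]$ introduced solely to confuse Eve, with rate
\[
R_s' = I(W;Z|V) - \epsilon.
\]

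\emph{Codebook and encoding.} I would generate $2^{nR_p}$ cloud centers $v^n(m_p)$ i.i.d.\ from $\prod_i P_V$, and for each $m_p$ generate $2^{n(R_s+R_s')}$ satellite codewords $w^n(m_p,m_s,m_s')$ i.i.d.\ from $\prod_i P_{W|V}(\cdot\,|\,v_i(m_p))$. The channel input is obtained by passing the chosen $w^n$ through the memoryless test channel $P_{X|W}$, which is admissible precisely because $V\inout W\inout X\inout YZ$ holds. To send a pair $(m_s,m_p)$ satisfying $(\ref{uniform})$, the encoder draws $m_s'$ uniformly and transmits the associated $x^n$, which realizes the required conditional law on the inputs. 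For reliability, Bob performs joint-typicality decoding of the full triple $(m_p,m_s,m_s')$. The standard superposition error analysis, together with $I(W;Y)=I(V;Y)+I(W;Y|V)$ (which holds since $V\inout W\inout Y$), shows that the error probability vanishes provided
\[
R_p \le I(V;Y), \qquad R_s + R_s' \le I(W;Y|V).
\]
Substituting the chosen $R_s'$ turns the second constraint into $R_s \le I(W;Y|V)-I(W;Z|V)+\epsilon$, recovering the claimed region as $\epsilon\to 0$.

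\emph{Secrecy.} The crux, and the step I expect to be hardest, is verifying weak secrecy $\tfrac1n I(M_s;Z^n|M_p)\to 0$. Here I would use the classical equivocation argument: working inside a fixed cloud $v^n(m_p)$, write
\[
I(M_s;Z^n|M_p) = I(M_s,M_s';Z^n|M_p) - I(M_s';Z^n|M_s,M_p),
\]
bound the first term by $I(W^n;Z^n|V^n)\le n\,I(W;Z|V)+n\epsilon_1$ via memorylessness and the Markov structure, and lower-bound the second by $nR_s' - H(M_s'|Z^n,M_s,M_p)$. Because $R_s' = I(W;Z|V)-\epsilon$ sits strictly below Eve's within-cloud mutual information, a Fano/typicality argument shows that Eve can reliably decode $M_s'$ once $(M_s,M_p)$ are revealed, so $H(M_s'|Z^n,M_s,M_p)\le n\epsilon_2$; the two estimates then cancel up to $n(\epsilon+\epsilon_1+\epsilon_2)$, giving per-letter vanishing leakage.

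The delicate points are to carry out this decoding-of-randomization step \emph{within the correct cloud} (requiring a second-stage typicality decoder for Eve together with the usual random-coding averaging and expurgation over codebooks), to justify the single-letterization $I(W^n;Z^n|V^n)\le n\,I(W;Z|V)$ from the memoryless channel and the chain $W\inout X\inout Z$, and to confirm that the whole argument is uniform over every message distribution meeting $(\ref{uniform})$ by conditioning on $M_p$ throughout. Since only weak secrecy is demanded, this equivocation route suffices and one need not invoke the stronger channel-resolvability (soft-covering) machinery.
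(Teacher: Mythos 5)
Your proof is sound, and the details you flag as delicate (second-stage typicality decoding of $M_{s}'$ within the correct cloud, the single-letterization $I(W^n;Z^n|V^n)\le nI(W;Z|V)$ under random-codebook averaging, and uniformity over every $P_{M_p}$ via per-$m_p$ expurgation) are all fillable by standard arguments; the reliability side, including the use of $I(W;Y)=I(V;Y)+I(W;Y|V)$ to absorb the joint constraint, is handled correctly. Note, however, that this paper does not prove Theorem \ref{region} in-line at all---it defers entirely to \cite{allerton}---and the secrecy machinery that this paper and that line of work actually build on differs from yours in one substantive way. You take the classical Csisz\'ar--K\"orner equivocation route: pick the randomization rate $R_s'$ just \emph{below} $I(W;Z|V)$, let Eve decode $M_s'$ given $(M_s,M_p)$ by Fano, and cancel the two mutual-information estimates. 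The paper's toolchain (Lemmas \ref{lem2} and \ref{lem1}, Theorem \ref{thmstrong}, and the codebook construction imported from \cite{raf}) instead rests on channel resolvability (soft covering): the within-cloud randomization rate is chosen \emph{above} $I(W;Z|V)$, so that $P_{Z^n|M_s=m_s,M_p=m_p}$ is exponentially close in total variation to a single measure $\theta_{m_p}$ independent of $m_s$, from which leakage bounds follow via Lemma \ref{lem2}. What each buys: your route is more elementary and, as you say, fully suffices for the weak-secrecy statement of Theorem \ref{region} itself; the resolvability route costs more machinery but delivers \emph{strong} secrecy, i.e.\ $I(M_s;Z^n|M_p)\to 0$ unnormalized, which this paper genuinely needs downstream---the splice of source and channel codes in Theorem \ref{nocausal} explicitly requires strong secrecy (see the remark around $(\ref{strongs})$ in Appendix \ref{detail}, where the paper notes weak secrecy is not sufficient). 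So your closing claim that the soft-covering machinery ``need not'' be invoked is accurate for this theorem in isolation, but an equivocation-only proof would leave the paper's main result unreachable, which is precisely why the paper strengthens Theorem \ref{region} to Theorem \ref{thmstrong} by the total-variation argument rather than by your Fano-based cancellation.
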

The proof can be found in \cite{allerton}. Let us denote the above region as $\mathcal{R}$. We now strengthen the result by considering strong secrecy introduced in \cite{maurer}. Later we will use strong secrecy to connect the operationally separate source and channel encoders.

\begin{defn} \label{def-strong}
The rate pair $(R_s, R_p)$ is achievable under strong secrecy if for all $(M_s, M_p)$ satisfying $(\ref{uniform})$, there exists a sequence of $(R_s, R_p, n)$ channel codes such that
$$\lim_{n\rightarrow \infty}\mathbb{P}[(M_p,M_s)\neq(\hat{M}_s,\hat{M}_p)]=0$$
and
$$\lim_{n\rightarrow \infty} I(M_s;Z^n|M_p)=0.$$
\end{defn}
In general, weak secrecy does not necessarily imply that strong secrecy is also achievable; however, in this particular setting we have the following claim:
\begin{thm}\label{thmstrong}
A rate pair $(R_s, R_p)$ achievable under weak secrecy is also achievable under strong secrecy.
\end{thm}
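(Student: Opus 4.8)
The plan is to revisit the random-coding argument that proves Theorem~\ref{region} and upgrade its secrecy guarantee from the normalized leakage $\frac1n I(M_s;Z^n|M_p)$ to the unnormalized $I(M_s;Z^n|M_p)$, at the \emph{same} rate pair $(R_s,R_p)\in\mathcal{R}$. The conceptual crux is the factor of $n$ separating the two notions: if the statistics seen by Eve are controlled only through the total variation distance, the standard continuity bound converting total variation into mutual information carries a multiplicative penalty of order $n\log|\mathcal{Z}|$. A total-variation gap that merely vanishes is therefore useless here; I need it to vanish \emph{exponentially} in $n$. This is exactly why strong secrecy is more demanding than weak secrecy, and why the construction of \cite{raf} must be modified rather than quoted verbatim. (An alternative would be a Maurer--Wolf-style privacy-amplification boosting, but since $M_s$ is the message that must be delivered intact to Bob, the resolvability route below is cleaner.)

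First I would fix an input distribution $P_{V,W,X}$ attaining a target point of $\mathcal{R}$, so that $R_s<I(W;Y|V)-I(W;Z|V)$ and $R_p<I(V;Y)$, and build a random superposition code equipped with an auxiliary randomization (dummy) index $L$ of rate $R_\ell$ chosen slightly above $I(W;Z|V)$. The cloud centers $V^n(m_p)$ carry the non-confidential message, and for each $m_p$ the satellite codewords $W^n(m_p,m_s,\ell)$ are indexed by the confidential message together with the dummy index. Because $R_s+R_\ell$ stays below $I(W;Y|V)$ while $R_p<I(V;Y)$, Bob can reliably decode the entire triple $(m_p,m_s,\ell)$ by joint-typicality decoding, so the reliability requirement in Definition~\ref{def-strong} is met on all of $\mathcal{R}$. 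The role of $L$ is solely to inject the randomness that Eve's channel must ``absorb.''

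The heart of the argument is a soft-covering (channel-resolvability) estimate: for each fixed $(m_s,m_p)$, a randomization index $L$ of rate exceeding $I(W;Z|V)$ suffices to make the induced conditional law $P_{Z^n\mid M_s=m_s,M_p=m_p}$ concentrate, in total variation, around a target output law $Q_{Z^n\mid M_p=m_p}$ that is independent of $m_s$. I would invoke the exponential form of this lemma, which shows that over the random ensemble the expected total variation decays like $2^{-n\gamma}$ for some $\gamma>0$. Averaging the secrecy and reliability metrics over the codebook ensemble and applying a union bound over the exponentially many message pairs --- where the per-pair exponent $\gamma$ must dominate the $2^{n(R_s+R_p)}$ message count, which is arranged by operating strictly inside $\mathcal{R}$ --- then yields a single deterministic code for which $\|P_{Z^n\mid m_s,m_p}-Q_{Z^n\mid m_p}\|_{TV}\le 2^{-n\gamma'}$ simultaneously for all $(m_s,m_p)$, while Bob's error probability still vanishes.

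Finally I would convert this uniform, exponentially small total-variation bound into the strong-secrecy statement. Since $M_s$ is conditionally uniform given $M_p$ by $(\ref{uniform})$, the $m_s$-average of $P_{Z^n\mid m_s,m_p}$ is again close to the same $Q_{Z^n\mid m_p}$, so all conditionals and their average cluster around one fixed law; applying the standard bound that controls $I(M_s;Z^n|M_p)$ by the total-variation distance times $n\log|\mathcal{Z}|$ plus a binary-entropy term, together with $\|\cdot\|_{TV}\le 2^{-n\gamma'}$, overwhelms even the $n\log|\mathcal{Z}|$ factor and gives $I(M_s;Z^n|M_p)\to_n 0$. I expect the main obstacle to be precisely this soft-covering step and its accompanying union bound: one must establish the resolvability exponent \emph{conditionally} on each cloud center $V^n(m_p)$ and \emph{uniformly} over $m_s$, and verify that this exponent beats the number of constraints being union-bounded. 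Correctly stacking the superposition structure --- confidential message, dummy index, and non-confidential cloud --- so that reliable decoding at Bob and exponential covering at Eve hold at once is exactly the modification of \cite{raf} that the theorem demands.
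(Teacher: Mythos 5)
Your plan coincides with the paper's proof in its essential logic. The paper likewise reduces strong secrecy to a total-variation requirement that decays exponentially in $n$, uniformly over message pairs: Lemma \ref{lem1} posits a measure $\theta_{m_p}$ (your $Q_{Z^n|m_p}$, independent of $m_s$) with $||P_{Z^n|M_p=m_p,M_s=m_s}-\theta_{m_p}||_{TV}\leq 2^{-n\beta}$, and via Lemma \ref{lem2} (Lemma 2.7 of \cite{ckbook} plus Jensen) concludes $I(M_s;Z^n|M_p)\leq 2\cdot 2^{-n\beta}\, nR_s\rightarrow_n 0$ --- the linear-in-$n$ penalty there is $\log|\mathcal{M}_s|=nR_s$ rather than your $n\log|\mathcal{Z}|$, an immaterial difference since either linear factor is overwhelmed by the exponential decay. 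Where the paper then simply cites the codebook construction and choice of measure of \cite{raf} for the existence of such codes at every rate pair in $\mathcal{R}$, you unpack that black box into the superposition-plus-dummy-index resolvability construction, which is indeed the content of \cite{raf}; so your route is the paper's route with the citation expanded.

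The one step that would fail as you stated it is the union-bound accounting. From an \emph{expected} per-pair total variation of order $2^{-n\gamma}$, Markov's inequality yields a per-pair failure probability of only about $2^{-n\gamma/2}$ at threshold $2^{-n\gamma/2}$, and a union bound over $2^{n(R_s+R_p)}$ message pairs then requires $\gamma/2>R_s+R_p$. This cannot be ``arranged by operating strictly inside $\mathcal{R}$'': the soft-covering exponent $\gamma$ is governed by the slack $R_\ell-I(W;Z|V)$ and the channel statistics, not by the message rates, and it can be arbitrarily small while $R_s+R_p$ is close to capacity. The standard repair --- and what the construction you are reverse-engineering actually relies on --- is the concentration form of the soft-covering/resolvability lemma: for each fixed $(m_s,m_p)$, the probability over the random satellite sub-codebook of size $2^{nR_\ell}$ that the total variation exceeds $2^{-n\gamma'}$ is \emph{doubly} exponentially small in $n$, so the union over the singly exponential number of pairs (together with the reliability event) succeeds with room to spare. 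With that substitution in place of your Markov-plus-union step, your argument goes through and delivers exactly the hypothesis of Lemma \ref{lem1}, matching the paper.
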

The following two lemmas will assist the proof of Theorem \ref{thmstrong} by providing a sufficient condition for satisfying the secrecy constraint $\lim_{n\rightarrow\infty}I(M_s;Z^n|M_p)=0$.
\begin{lem}\label{lem2}
If $||P_{Z^n|M_p=m_p}P_{M_s|M_p=m_p}-P_{Z^nM_s|M_p=m_p}||_{TV}\leq\epsilon\leq \frac12$, then
$$I(M_s;Z^n|M_p=m_p)\leq -\epsilon\log\frac{\epsilon}{|\mathcal{M}_s|}.$$
\end{lem}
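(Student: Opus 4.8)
The plan is to turn the conditional mutual information into a difference of conditional entropies and then reduce the claim to a standard continuity estimate for entropy under total variation. Throughout I condition on the event $M_p=m_p$ and abbreviate $P\triangleq P_{Z^nM_s|M_p=m_p}$ for the true joint law and $Q\triangleq P_{Z^n|M_p=m_p}\,P_{M_s|M_p=m_p}$ for the product of its marginals, so that the hypothesis reads $\|P-Q\|_{TV}\le\epsilon$.

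First I would record the identity
\[
I(M_s;Z^n\mid M_p=m_p)=H_Q(M_s\mid Z^n,m_p)-H_P(M_s\mid Z^n,m_p).
\]
To obtain it, expand $I=H_P(M_s\mid m_p)-H_P(M_s\mid Z^n,m_p)$; since $P$ and $Q$ share the $M_s$-marginal, $H_P(M_s\mid m_p)=H_Q(M_s\mid m_p)$, and since $M_s$ and $Z^n$ are independent under the product law $Q$, $H_Q(M_s\mid m_p)=H_Q(M_s\mid Z^n,m_p)$. (Note that the uniformity assumption $(\ref{uniform})$ is not needed here.) Thus the leakage is \emph{exactly} the change in the conditional entropy of $M_s$ incurred by replacing $Q$ by $P$. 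Because $P$ and $Q$ also share the same $Z^n$-marginal, I may write both conditional entropies against the common weighting $P_{Z^n|m_p}$, giving
\[
I(M_s;Z^n\mid m_p)=\mathbb{E}_{Z^n}\!\left[H_Q(M_s\mid Z^n)-H_P(M_s\mid Z^n)\right],
\]
an average over $z^n$ of entropy gaps between the two laws $P_{M_s|m_p}$ and $P_{M_s|z^n,m_p}$ on the \emph{finite} set $\mathcal{M}_s$. Writing $\delta(z^n)$ for the total variation between these two laws, a direct computation gives $\mathbb{E}_{Z^n}[\delta(Z^n)]=\|P-Q\|_{TV}\le\epsilon$. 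The finiteness of $\mathcal{M}_s$ is what keeps the entropy gaps bounded even though $Z^n$ may live on a large or continuous alphabet.

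The core estimate is the standard continuity bound for entropy under total variation: two distributions on a finite alphabet $\mathcal{A}$ at total variation $t\le\tfrac12$ have entropies differing by at most $-t\log\frac{t}{|\mathcal{A}|}$, which with the normalization used here matches the claimed form and is where the hypothesis $\epsilon\le\tfrac12$ enters. Applying this with $\mathcal{A}=\mathcal{M}_s$ bounds each integrand by $-\delta(z^n)\log\frac{\delta(z^n)}{|\mathcal{M}_s|}$, after which I average and invoke Jensen's inequality against the concave map $t\mapsto-t\log\frac{t}{|\mathcal{M}_s|}$ to collapse $\mathbb{E}[\delta(Z^n)]$ into $\epsilon$ and recover $-\epsilon\log\frac{\epsilon}{|\mathcal{M}_s|}$.

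The main obstacle is precisely this last passage. The elementary continuity bound is guaranteed only for total variation $t\le\tfrac12$, whereas the hypothesis controls merely the \emph{average} $\mathbb{E}[\delta(Z^n)]\le\epsilon\le\tfrac12$, so individual $\delta(z^n)$ may exceed $\tfrac12$ and the elementary bound need not hold termwise. The clean way around this is to run Jensen's inequality not against the elementary bound but against the \emph{tight} entropy-gap envelope $G(t)$, which is concave on the entire admissible range of $t$; concavity then yields $\mathbb{E}_{Z^n}[H_Q(M_s\mid Z^n)-H_P(M_s\mid Z^n)]\le G(\epsilon)$, and only at the single point $\epsilon\le\tfrac12$ do I invoke $G(\epsilon)\le-\epsilon\log\frac{\epsilon}{|\mathcal{M}_s|}$. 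This separates the global concavity step from the elementary estimate, which is needed only at $\epsilon$, and completes the proof.
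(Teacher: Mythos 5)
Your proof follows the same skeleton as the paper's: the paper defines $\epsilon_{z^n}=\|P_{M_s|M_p=m_p}-P_{M_s|Z^n=z^n,M_p=m_p}\|_{TV}$, observes that $\mathbb{E}_{P_{Z^n|M_p=m_p}}[\epsilon_{z^n}]$ equals the hypothesized total variation and is thus at most $\epsilon$, invokes Lemma 2.7 of Csisz\'{a}r--K\"{o}rner termwise to bound each entropy gap by $-\epsilon_{z^n}\log\frac{\epsilon_{z^n}}{|\mathcal{M}_s|}$, and finishes with Jensen's inequality applied to the concave map $t\mapsto -t\log\frac{t}{|\mathcal{M}_s|}$ --- exactly your decomposition, including the observation (implicit in the paper) that the uniformity condition $(\ref{uniform})$ plays no role. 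Where you genuinely depart from the paper is in the last passage, and your concern there is well founded: the paper applies Lemma 2.7 at every $z^n$ without verifying $\epsilon_{z^n}\leq\frac12$, yet the hypothesis only controls the \emph{average} of $\epsilon_{z^n}$, and the expression $-t\log\frac{t}{|\mathcal{M}_s|}$ genuinely fails as an entropy-gap bound for large $t$ (e.g., with $|\mathcal{M}_s|=2$ it drops below the true maximal gap $\log 2$ once $t$ is large enough), so the paper's termwise step is, strictly speaking, a lacuna. Your repair is correct: the tight envelope $G(t)$ (the Fannes--Audenaert-type bound $t\log(|\mathcal{M}_s|-1)+h(t)$, capped at $\log|\mathcal{M}_s|$, suitably rescaled to the paper's total-variation convention) is concave and nondecreasing on its whole range, so Jensen gives $\mathbb{E}[G(\epsilon_{z^n})]\leq G(\epsilon)$ without any pointwise smallness condition, and the elementary estimate $G(\epsilon)\leq-\epsilon\log\frac{\epsilon}{|\mathcal{M}_s|}$ is then needed only at the single point $\epsilon\leq\frac12$, where Lemma 2.7 is valid and where $G$, being the tight bound, necessarily sits below it. In short, your argument is the paper's argument made airtight; the only thing you assert without proof is the concavity of $G$, which is standard and can be read off its explicit form (both pieces are concave and the derivative vanishes at the junction $t=1-1/|\mathcal{M}_s|$ in normalized distance).
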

\begin{proof}
Let $\epsilon_{z^n}=\left|\left| P_{M_s|M_p=m_p}-P_{M_s|Z^n=z^n,M_p=m_p}\right|\right|_{TV}$. Therefore, 
$$\mathbb{E}_{P_{Z^n|M_p=m_p}}[\epsilon_{z^n}]=\left|\left|P_{Z^n|M_p=m_p}P_{M_s|M_p=m_p}-P_{Z^nM_s|M_p=m_p}\right|\right|_{TV}\leq \epsilon.$$
By Lemma 2.7 \cite{ckbook}, $\left|H(M_s|M_p=m_p)-H(M_s|Z^n=z^n,M_p=m_p)\right|\leq -\epsilon_{z^n}\log{\frac{\epsilon_{z^n}}{|\mathcal{M}_s|}}.$ Note that $f(x)\triangleq-x\log x$ is concave. And by applying Jensen's inequality twice, we have
\begin{eqnarray*}
&&I(M_s;Z^n|M_p=m_p)=\left|\mathbb{E}_{P_{Z^n|M_p=m_p}}\left[H(M_s|M_p=m_p)-H(M_s|Z^n=z^n,M_p=m_p)\right] \right|\\
&\leq&\mathbb{E}_{P_{Z^n|M_p=m_p}}\left[\left|H(M_s|M_p=m_p)-H(M_s|Z^n=z^n,M_p=m_p)\right|\right]\\
&\leq&\mathbb{E}_{P_{Z^n|M_p=m_p}}\left[-\epsilon_{z^n}\log{\frac{\epsilon_{z^n}}{|\mathcal{M}_s|}}\right]\\
&\leq& -\epsilon\log{\frac{\epsilon}{|\mathcal{M}_s|}}.
\end{eqnarray*}
\end{proof}
\begin{lem}\label{lem1}
If for every $(m_s,m_p)$, there exists a measure $\theta_{m_p}$ on $\mathcal{Z}^n$ such that
$$||P_{Z^n|M_p=m_p, M_s=m_s}-\theta_{m_p}||_{TV}\leq \epsilon_n$$
then
$$\lim_{n\rightarrow \infty}I(M_s;Z^n|M_p)=0$$
where $\epsilon_n=2^{-n\beta}$ for some $\beta>0$.
\end{lem}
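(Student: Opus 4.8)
The plan is to reduce this to Lemma~\ref{lem2}, whose hypothesis is a total-variation bound between the conditional joint law $P_{Z^nM_s|M_p=m_p}$ and the product $P_{Z^n|M_p=m_p}P_{M_s|M_p=m_p}$ of its marginals. The feature of the present hypothesis that makes this work is that the reference measure $\theta_{m_p}$ depends only on $m_p$ and is \emph{common to every} $m_s$; this common anchor is what will force $Z^n$ to be almost independent of $M_s$ once $M_p$ is fixed.

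First I would transfer the closeness to the marginal. Since $P_{Z^n|M_p=m_p}=\sum_{m_s}P_{M_s|M_p=m_p}(m_s)\,P_{Z^n|M_p=m_p,M_s=m_s}$ is a convex combination and total variation is convex, the hypothesis gives $||P_{Z^n|M_p=m_p}-\theta_{m_p}||_{TV}\leq\epsilon_n$. A triangle inequality through $\theta_{m_p}$ then yields, for every $m_s$,
$$||P_{Z^n|M_p=m_p}-P_{Z^n|M_p=m_p,M_s=m_s}||_{TV}\leq 2\epsilon_n.$$

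Next I would factor $P_{M_s|M_p=m_p}(m_s)$ out of the quantity needed by Lemma~\ref{lem2}. Using $P_{Z^nM_s|M_p=m_p}(z^n,m_s)=P_{M_s|M_p=m_p}(m_s)\,P_{Z^n|M_p=m_p,M_s=m_s}(z^n)$ together with the per-$m_s$ bound above, the weighted average over $m_s$ collapses to
$$||P_{Z^n|M_p=m_p}P_{M_s|M_p=m_p}-P_{Z^nM_s|M_p=m_p}||_{TV}\leq 2\epsilon_n.$$
For $n$ large enough that $2\epsilon_n\leq\tfrac12$, Lemma~\ref{lem2} applies with $\epsilon=2\epsilon_n$ and gives $I(M_s;Z^n|M_p=m_p)\leq -2\epsilon_n\log\frac{2\epsilon_n}{|\mathcal{M}_s|}$ uniformly in $m_p$, hence the identical bound for the average $I(M_s;Z^n|M_p)$.

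Finally I would insert $\epsilon_n=2^{-n\beta}$ and $|\mathcal{M}_s|=2^{nR_s}$, turning the bound into $2^{1-n\beta}(n\beta+nR_s-1)$, which vanishes as $n\rightarrow\infty$. I expect this last step to be the crux: the bound survives only because the exponential decay $\epsilon_n=2^{-n\beta}$ overwhelms the linear-in-$n$ growth of $\log|\mathcal{M}_s|=nR_s$. This is exactly the gap between weak and strong secrecy---a merely vanishing but non-exponential $\epsilon_n$ would control $\frac1n I(M_s;Z^n|M_p)$ yet fail to control $I(M_s;Z^n|M_p)$ itself.
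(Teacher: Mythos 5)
Your proposal is correct and follows essentially the same route as the paper's own proof: both first transfer the hypothesis to the marginal $P_{Z^n|M_p=m_p}$, then use the triangle inequality through the common anchor $\theta_{m_p}$ to obtain $||P_{Z^n|M_p=m_p}P_{M_s|M_p=m_p}-P_{Z^nM_s|M_p=m_p}||_{TV}\leq 2\epsilon_n$, and finally invoke Lemma~\ref{lem2} with $\epsilon=2\epsilon_n$, letting the exponential decay $2^{-n\beta}$ overwhelm the linear growth of $\log|\mathcal{M}_s|=nR_s$. If anything you are marginally more careful than the paper, which does not state the requirement $2\epsilon_n\leq\frac12$ for Lemma~\ref{lem2} and drops the lower-order terms in its final bound $2\cdot 2^{-n\beta}(nR_s)$.
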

A proof of Lemma \ref{lem1} is given in Appendix \ref{applem1}.

If there exist channel codes such that $\mathbb{P}\left[(M_s,M_p)\neq(\hat{M}_s,\hat{M}_p)\right]\rightarrow_n 0$ and measure $\theta_{m_p}$ for all $(m_s,m_p)$ such that $||P_{Z^n|M_p=m_p, M_s=m_s}-\theta_{m_p}||_{TV}\leq \epsilon_n$, then Theorem \ref{thmstrong} follows immediately. The existence of such a code and measure is assured by the same codebook construction and choice of measure as in \cite{raf}.

\subsection{Source Coding} \label{source}
Recall from our problem setup in Section \ref{system-model} that the sender Alice has an i.i.d. source sequence $S^k$. A source encoder is needed to prepare $S^k$ by encoding it into a pair of messages $(M_s, M_p)$ that satisfies $P_{M_s|M_p=m_p}(m_s)=2^{-kR_s'}=2^{-nR_s}$ so that it forms a legitmate input to the channel model in Section \ref{channel}.

\begin{defn}
An $(R_s', R_p', k)$ source code consists of an encoder $f_s$ and a decoder $g_s$ such that
$$f_s: \mathcal{S}^k \mapsto \mathcal{M}_s \times \mathcal{M}_p$$
$$g_s: \mathcal{M}_s\times\mathcal{M}_p \mapsto \mathcal{S}^k$$
where $|\mathcal{M}_s|=2^{kR_s'}$ and $|\mathcal{M}_p|=2^{kR_p'}$.
\end{defn}

\begin{defn}
A rate distortion triple $(R_s',R_p',D)$ is achievable under a given distortion measure $d(s,t)$ if there exists a sequence of $(R_s',R_p',k)$ source codes such that
$$\lim_{k\rightarrow \infty}\mathbb{P}\left[S^k\neq g_s(f_s(S^k))\right]=0$$
and the message pair generated by the source encoder satisfies $P_{M_s|M_p=m_p}(m_s)=2^{-kR_s'}$ and for all $P_{Z^n|M_sM_p}$ such that $I(M_s;Z^n|M_p)\rightarrow_n 0$
$$\liminf_{k\rightarrow\infty}\min_{t^k(z^n)}\mathbb{E}\left[d^k(S^k,t^k(Z^n))\right]\geq D.$$
\end{defn}

\begin{thm}\label{sourcecoding}
$(R_s',R_p',D)$ is achievable if 
$$R_s'>0,$$
$$R_s'+R_p'> H(S),$$
and
$$D\leq \Delta.$$
\end{thm}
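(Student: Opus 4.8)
The plan is to prove achievability by exhibiting an explicit random-binning source code that maps $S^k$ to the pair $(M_s,M_p)$, and to check the three requirements of the preceding definition one at a time: almost-lossless reconstruction, the conditional-uniformity constraint $P_{M_s|M_p=m_p}(m_s)=2^{-kR_s'}$, and the distortion lower bound. I would build on the source code of \cite{schieler}, arranged so that the confidential part $M_s$ carries a strictly positive rate.

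\emph{Construction, reconstruction and uniformity.} Since $R_s'>0$ I can fix a rate $\rho_p$ with $H(S)-R_s'<\rho_p<H(S)$ and $\rho_p\le R_p'$. Let $M_s$ be a rate-$R_s'$ random bin index of $S^k$ and let $M_p$ carry a rate-$\rho_p$ random bin index of $S^k$, using only $2^{k\rho_p}$ of the $2^{kR_p'}$ available values of $\mathcal{M}_p$ (the unused capacity is set to a fixed value, which is needed only when $R_p'$ is large). Because the two active descriptions have total rate $R_s'+\rho_p>H(S)$, a joint-typicality decoder $g_s$ recovers the unique typical $s^k$ consistent with $(M_s,M_p)$ with probability of error $\to_k 0$, giving the first condition. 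The constraint $P_{M_s|M_p=m_p}(m_s)=2^{-kR_s'}$ I would obtain by symmetrizing the bin assignment (as in \cite{schieler}) so that, restricted to each $M_p$-bin, the $M_s$-labels are exactly equiprobable.

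\emph{Distortion bound (the crux).} Fix any channel $P_{Z^n|M_sM_p}$ with $I(M_s;Z^n|M_p)\to_n 0$. Since handing an estimator extra side information can only lower its optimal distortion, it suffices to bound from below the distortion of an estimator that is given $(Z^n,M_p)$. Using the Markov chain $S^k\inout(M_s,M_p)\inout Z^n$ and the conditional uniformity of $M_s$, the posterior $P_{S^k|Z^n=z^n,M_p=m_p}$ is a mixture over $M_s$ with weights $P_{M_s|Z^n=z^n,M_p=m_p}$; writing $I(M_s;Z^n|M_p)$ as an averaged divergence and applying Pinsker's inequality and Jensen, the average total-variation distance between these weights and the uniform law $P_{M_s|M_p}$ is at most $\sqrt{I(M_s;Z^n|M_p)/2}\to 0$, so on average over $Z^n$ we get $P_{S^k|Z^n,M_p}\approx P_{S^k|M_p}$ in total variation. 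Hence, up to a vanishing term (here boundedness of $d$ converts the total-variation estimate into a distortion estimate), Eve's distortion is at least $\frac1k\sum_i\min_t\mathbb{E}[d(S_i,t)\,|\,M_p]$. It then remains to show that conditioning on $M_p$ leaves each coordinate close to the prior, $P_{S_i|M_p=m_p}\approx P_S$: since $\rho_p<H(S)$, each $M_p$-bin still contains $\approx 2^{k(H(S)-\rho_p)}\to\infty$ typical sequences, and a concentration argument over the random binning shows the per-coordinate law inside a bin equals $P_S$ up to $o(1)$. Because $t\mapsto\mathbb{E}[d(S,t)]$ is continuous in the source law and $\Delta=\min_t\mathbb{E}_{P_S}[d(S,t)]$, each inner minimum is $\ge\Delta-o(1)$, so $\liminf_k\min_{t^k(z^n)}\mathbb{E}[d^k(S^k,t^k(Z^n))]\ge\Delta\ge D$.

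The hard part will be the per-coordinate marginal-preservation claim $P_{S_i|M_p=m_p}\approx P_S$ precisely in the regime where $R_s'$ is small: then $\rho_p$ is close to $H(S)$, the bins are only subexponentially large, and the concentration is weak, so one must argue that the binning leaks only ``global'' and essentially no ``per-letter'' information (a random linear-style hash has large joint rate $\rho_p$ yet negligible $I(S_i;M_p)$ for each $i$), and that this holds uniformly over $i$ and over most $m_p$. A secondary technical nuisance is enforcing the conditional uniformity of $M_s$ given $M_p$ \emph{exactly}, as required to feed the channel code of Theorem \ref{region}; this is what forces the symmetrized index assignment rather than a plain i.i.d. random bin.
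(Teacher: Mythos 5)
Your proposal is correct in substance and follows the same three-part architecture as the paper's proof: (i) a binning source code with exact conditional uniformity of $M_s$ given $M_p$, (ii) a reduction showing that under strong secrecy $Z^n$ is no more useful to Eve than the bin index $M_p$, and (iii) the claim that the distortion given $M_p$ alone is asymptotically $\Delta$. Your step (ii) is essentially the paper's Appendix~\ref{detail} in a different parametrization: the paper bounds $I(S_i;Z^n|M_p)\leq I(M_s;Z^n|M_p)+I(S_i;Z^n|M_sM_p)$ coordinatewise and applies Pinsker to compare $P_{S_iZ^nM_p}$ with the Markov version $S_i\inout M_p\inout Z^n$, while you apply Pinsker once to the posterior weights $P_{M_s|Z^n,M_p}$ and push the total-variation bound through the mixture representation of $P_{S^k|Z^n,M_p}$; the two are equivalent, and both correctly require strong rather than weak secrecy since the mutual information must vanish unnormalized. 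Where you genuinely diverge is step (iii): the paper never proves the per-letter marginal-preservation claim itself, but instead invokes the noiseless-case result of \cite{schieler}, which shows that for a random partition of the typical set into equal-size bins each consisting of sequences of a single type, the distortion given the bin index averages to $\Delta$, so some fixed partition achieves it. Your plan to re-derive this by concentration over random binning is viable, and this buys a self-contained proof at the cost of redoing \cite{schieler}; note, however, that your worry about weak concentration when $R_s'$ is small is misplaced: $R_s'>0$ is a fixed constant, so you may take, say, $\rho_p=H(S)-R_s'/2$, making every bin hold about $2^{kR_s'/2}$ typical sequences --- exponentially many --- so Chernoff-type bounds survive the union over the $2^{k\rho_p}$ bins, the $k$ coordinates and the $|\mathcal{S}|$ symbols. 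The one soft spot you should repair is the uniformity step: merely equalizing the number of sequences per $M_s$-label within each $M_p$-bin does not yield $P_{M_s|M_p=m_p}(m_s)=2^{-kR_s'}$ exactly when a bin mixes types, since typical-sequence probabilities vary by factors of $2^{\pm k\epsilon'}$; the paper's construction resolves this by building each bin entirely from sequences of one type (hence equiprobable), and it handles the polynomially many leftover ``bad'' bins with a dummy $M_s$ as described in its footnote. With that type-class binning substituted for plain i.i.d. binning, your argument closes.
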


The general idea for achievability is to consider the $\epsilon$-typical $S^k$ sequences and partition them into bins of equal size so that each bin contains sequences of the same type. The identity $M_p$ of the bin is revealed to all parties, but the identity $M_s$ of each sequence inside a bin is perfectly protected.  \footnote[1]{Strictly speaking, the source encoder may violate the condition $(\ref{uniform})$ on $(k+1)^{|\mathcal{S}|}$ number of bins, because $(k+1)^{|\mathcal{S}|}$ is an upper bound on the number of types of sequence with length $k$. However, this is just a very small (polynomial in $k$) number of bins compared with the total number (roughly $2^{kH(S)}$) of bins. Therefore, for this small portion of ``bad" bins that violates $(\ref{uniform})$, we can just let the source encoder declare an error on the confidential message $M_s$ and constructs a dummy $M_s$ uniformly given the bin index $m_p$. This will contribute only an $\epsilon$ factor to the error probability. } Each of such partitions is treated as a codebook. It was shown in \cite{schieler} that, for the noiseless case in which Eve is given $m_p$ instead of $z^n$, the distortion averaged over all such codebooks achieves the maximum distortion $\Delta$ as $k\rightarrow \infty$ and therefore there must exist one partition that achieves $\Delta$. In order to transition from the result in \cite{schieler} to our claim in Theorem \ref{sourcecoding}, we only need to show 
$$\min_{t^k(z^n)}\mathbb{E}\left[d^k(S^k,t^k(Z^n))\right]\geq\min_{t^k(m_p)}\mathbb{E}\left[d^k(S^k,t^k(M_p))\right].$$
\begin{IEEEproof}
First, observe that
\begin{eqnarray}
\min_{t^k(\cdot)} \mathbb{E}\left[d^k(S^k,t^k(\cdot))\right]&=&\frac1k \sum_{i=1}^k  \min_{t(i,\cdot)}\mathbb{E}\left[d(S_i, t(i,\cdot))\right] \label{equality}
\end{eqnarray}

Next, we claim the channel output sequence $z^n$ does not provide Eve anything more than $m_p$ and therefore
\begin{eqnarray}
\min_{t(i,z^n)}\mathbb{E}\left[\frac1k\sum_{i=1}^k d(S_i,t(i,Z^n))\right]&\geq&\min_{t(i,m_p)}\mathbb{E}\left[\frac1k\sum_{i=1}^k d(S_i,t(i,M_p))\right]-2\delta^\prime(\epsilon) \label{resultfin}
\end{eqnarray}
The analysis is similar to that in \cite{allerton}, but for the sake of clarity, we present the complete proof of $(\ref{resultfin})$ in Appendix \ref{detail}. Here strong secrecy comes into play. This is also pointed out within the proof in Appendix \ref{detail} that $I(M_s;Z^n|M_p)\rightarrow_n 0$ is needed.

Finally, combining $(\ref{resultfin})$ with $(\ref{equality})$ give us the desired result. 
\end{IEEEproof}

\subsection{Achievability of Theorem \ref{nocausal}}
With all the elements from Section \ref{channel} and \ref{source}, we are now ready to harvest the achievability proof of Theorem \ref{nocausal} using Theorems \ref{region} and \ref{sourcecoding} by concatenating the channel encoder with the source encoder.
\begin{IEEEproof}
Fix $\nu\geq\epsilon>0$. Fix $P_S$. Let ${R_s}^\prime=2\nu$, ${R_p}^\prime=H(S)-\nu$ and $R'={R_s}^\prime+{R_p}^\prime$. We apply the same codebook construction and encoding scheme as in Section \ref{source} by partioning the $\epsilon$-typical $S^k$ sequences into $2^{k{R_p}^\prime}$ bins and inside each bin we have $2^{k{R_s}^\prime}$ sequences so that $\mathbb{P}[S^k\neq g_s(f_s(S^k))]\leq \epsilon$. Recall that all the sequences inside one bin are of the same type, so it is guaranteed that
$$P_{M_s|M_p=m_p}(m_s)=\frac{1}{|\mathcal{M}_s|}=\frac{1}{2^{kR_s^\prime}}$$
for all $m_p$, $m_s$, which implies $I(M_s;M_p)=0$.

Let $R_s$ and $R_p$ be the channel rates. $R_p$ is seen as a function of $R_s$ on the boundary of the region given in Theorem \ref{region} and this is denoted by $R_p(R_s)$. Suppose $\max_{(R_s, R_p)\in \mathcal{R}}R_s>0$, i.e. there exists $W\inout X\inout YZ$ such that $I(W;Y)-I(W;Z)>0$ (justified in Appendix \ref{just}). $R_p(R_s)$ is continuous and non-increasing. Thus, $R_p$ achieves the maximum at $R_s=0$, which would be the channel capacity $\max_X I(X;Y)$ of $P_{Y|X}$ for reliable transmission. By the continuity of $R_p(R_s)$, $(R_s,R_p)=(2\nu\frac{k}{n},R_p(0)-\delta(\nu))$ is achievable under strong secrecy, i.e. $\mathbb{P}[(M_s, M_p)\neq(\hat{M}_s, \hat{M}_p)]\leq \epsilon$ and $I(M_s;Z^n|M_p)\leq \epsilon$, where $\delta(\nu)\rightarrow 0$ as $\nu\rightarrow 0$.

From the above good channel code under strong secrecy we have  $P_{Z^n|M_sM_p}$ such that $I(M_s;Z^n|M_p)\rightarrow_n 0$. Therefore, we can apply Theorem \ref{sourcecoding} to achieve 
$$\liminf_{k\rightarrow \infty}\min_{t^k(z^n)}\mathbb{E}\left[d^k(S^k,t^k(Z^n))\right]=D.$$

The error probability is bounded by the sum of the error probabilities from the source coding and channel coding parts i.e. $\mathbb{P}\left[S^k\neq \hat{S}^k\right]<2\epsilon$. Finally, we verify the total transmission rate to complete the proof:
\begin{eqnarray*}
R&=&\frac{k}{n}=\frac{R_s+R_p}{R_s^\prime+R_p^\prime}\\
&=&\frac{R_p(0)-\delta(\nu)+2R\nu}{H(S)+\nu}\\
&\geq&\frac{R_p(0)-\delta(\nu)}{H(S)+\nu}\\
&\stackrel{\nu\rightarrow 0}{\longrightarrow}&\frac{\max_X I(X;Y)}{H(S)}.
\end{eqnarray*}
\end{IEEEproof}

We next state the rate-distortion result for source-channel coding with an i.i.d. source sequence and discrete memoryless broadcast channel $P_{YZ|X}$ when \textbf{causal information} is available to Eve. The result comes from the rate matching of \cite{allerton}.
\begin{thm}\label{rdcausal}
For an i.i.d. source sequence $S^k$ and a memoryless broadcast channel $P_{YZ|X}$, a rate distortion pair $(R,D)$ is achievable if
\begin{eqnarray*}
R&\leq&\min\left(\frac{I(V;Y)}{I(S;U)}, \frac{I(W;Y|V)-I(W;Z|V)}{H(S|U)}\right),\\
D&\leq&\frac{\alpha}{R}\cdot \Delta+\left(1-\frac{\alpha}{R}\right)\cdot\min_{t(u)}\mathbb{E}\left[d(S,t(U))\right]
\end{eqnarray*}
for some distribution $P_SP_{U|S}P_VP_{W|V}P_{X|W}P_{YZ|X}$, where
$\alpha=\frac{[I(V;Y)-I(V;Z)]^+}{I(S;U)}$.
\end{thm}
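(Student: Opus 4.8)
The plan is to establish achievability by the same operational source--channel separation used for Theorem~\ref{nocausal}, now matched to the channel code of Theorem~\ref{region}. Because the causal-disclosure setting only requires weak secrecy to glue the two encoders together (as noted in the discussion preceding Theorem~\ref{region}), I would invoke Theorem~\ref{region} directly and avoid the strong-secrecy strengthening of Theorem~\ref{thmstrong}. The channel code delivers a non-confidential message $M_p$ reliably to Bob at any rate $R_p<I(V;Y)$ and a confidential message $M_s$ at any rate $R_s<I(W;Y|V)-I(W;Z|V)$ that is (weakly) secret from Eve. The source encoder I would build produces a two-layer description of $S^k$: a lossy ``public'' layer described through the auxiliary $U$ with $P_{U|S}$ fixed, costing $I(S;U)$ bits per source symbol, and a ``refinement'' layer that, given the public layer, completes the almost-lossless description of $S^k$ at cost $H(S|U)$ bits per symbol. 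The refinement layer is routed into $M_s$ and the public layer into $M_p$, so that Bob, decoding both messages, recovers $U^k$ and then $S^k$.

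The rate matching is then routine. The public layer requires $kI(S;U)$ bits carried by $M_p$, whose capacity is $nI(V;Y)$ bits, giving $kI(S;U)\le nI(V;Y)$; the refinement requires $kH(S|U)$ bits carried by $M_s$, whose secure capacity is $n[I(W;Y|V)-I(W;Z|V)]$ bits, giving $kH(S|U)\le n[I(W;Y|V)-I(W;Z|V)]$. Writing $R=k/n$ and combining the two inequalities yields $R\le\min\big(I(V;Y)/I(S;U),\,[I(W;Y|V)-I(W;Z|V)]/H(S|U)\big)$, exactly the rate condition in the statement. A union bound over the source-coding and channel-coding error events keeps $\mathbb{P}[S^k\neq\hat S^k]\to_n 0$.

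For the distortion I would exploit the fact that, although $M_p$ is ``non-confidential,'' Eve can resolve it only up to rate $I(V;Z)$: the surplus $[I(V;Y)-I(V;Z)]^+$ channel bits per use are hidden from her and function as a secret key shared with Bob, of total size $n[I(V;Y)-I(V;Z)]^+$. Following the causal-disclosure mechanism of \cite{allerton} (itself building on \cite{cuff-globecom} and \cite{schieler}), I would spend this key to fully protect the public descriptions of a fraction of the source symbols: a one-time pad applied with an independent key block per protected symbol conceals $U_j$ from Eve at those positions, while Bob removes the pad using the shared key. Since each protected symbol consumes $I(S;U)$ key bits, the number of protected symbols is $n[I(V;Y)-I(V;Z)]^+/I(S;U)$, i.e.\ a fraction $n[I(V;Y)-I(V;Z)]^+/(kI(S;U))=\alpha/R$ of all $k$ symbols. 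On those positions Eve has only the prior and her best estimate achieves distortion $\Delta$; on the remaining fraction $1-\alpha/R$ she learns $U_j$ but never the refinement, so her optimal estimate achieves $\min_{t(u)}\mathbb{E}[d(S,t(U))]$. Averaging over positions gives the claimed convex combination $D\le\frac{\alpha}{R}\Delta+\big(1-\frac{\alpha}{R}\big)\min_{t(u)}\mathbb{E}[d(S,t(U))]$.

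The main obstacle I anticipate is the causal-disclosure secrecy analysis, not the rate bookkeeping. Because Eve sees $S^{j-1}$ at time $j$, she can reconstruct the past public symbols $U^{j-1}$ and hence back out any key bits that were reused in the past; security of the current protected symbol therefore hinges on using fresh, independent key for each position, and on verifying that weak secrecy of $M_s$ together with $I(M_s;Z^n\mid M_p)\to_n 0$ suffices so that the refinement leaks nothing usable even with the causal side information. Making precise both this fresh-key argument and the claim that the public message genuinely carries a usable $[I(V;Y)-I(V;Z)]^+$-bit key (a wiretap-type secrecy surplus on the $V$-layer) is where I would lean most heavily on the constructions of \cite{allerton}; the only genuinely new element beyond \cite{allerton} is tracking the general rate $R$, which is what turns a fixed protected fraction into the $\alpha/R$ appearing in the distortion expression.
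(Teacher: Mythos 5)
Your proposal follows essentially the same route as the paper: the paper proves Theorem \ref{rdcausal} only by the one-line remark that "the result comes from the rate matching of \cite{allerton}," and your construction---a two-layer source code ($U$-description at rate $I(S;U)$ plus refinement at rate $H(S|U)$) matched to the channel code of Theorem \ref{region}, with weak secrecy sufficing under causal disclosure and the $[I(V;Y)-I(V;Z)]^+$ surplus on the public layer acting as key to fully protect an $\alpha/R$ fraction of the $U$-descriptions---is precisely that rate-matching scheme, with bookkeeping that reproduces both bounds of the theorem. The two caveats you flag yourself (protection must be done by padding sub-block covering indices with fresh key rather than literal per-symbol pads, and the per-letter analysis showing weak secrecy suffices against causal side information) are exactly the ingredients the paper likewise delegates to \cite{allerton}, so your write-up is, if anything, more explicit than the paper's own proof.
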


\subsection*{Example: binary symmetric broadcast channel (BSBCC) and binary source with Hamming distortion}
To visualize Theorem \ref{nocausal} and Theorem \ref{rdcausal}, we will illustrate the results with a BSBCC and binary source under Hamming distortion, defined as
\begin{displaymath}
   d_H(s,t) = \left\{
     \begin{array}{lr}
       0  , \ s=t,\\
       1  , \text{ otherwise.}
     \end{array}
   \right.
\end{displaymath}
With the above setting, suppose $S_i \sim$ Bern$(p)$, and the broadcast channel is binary symmetric with crossover probabilities to the intended receiver and the eavesdropper $p_1$ and $p_2$, respectively. Assume $p\leq 0.5$ and $p_1< p_2<0.5$.  It is well known that this can be treated as a physically degraded channel in capacity calculation. Let us make the following definitions:
\begin{eqnarray}
&&f(x) \text{ is the linear interpolation of the points } \left(\log n,\frac{n-1}{n}\right), n=1,2,3,...\label{def_f}\\
&&d(x) \triangleq \min(f(x), 1-\max_s P_S(s))\label{def_d},\\
&&h(x)\triangleq x\log\frac{1}{x}+(1-x)\log\frac{1}{1-x} \text{ is the binary entropy function,} \label{def_binentropy}\\
&& x_1*x_2 \triangleq x_1*(1-x_2)+(1-x_1)*x_2 \text{ is the binary convolution,} \label{def_binconv}
\end{eqnarray}
where $P_S(\cdot)$ is the probability mass function of the random variable $S$.
The corresponding rate-distortion regions for the no-causal-information and causal-information cases are given in the following corollaries.
\begin{cor} \label{bsc-nocausal}
For an i.i.d. Bern$(p)$ source sequence $S^k$ and BSBCC with crossover probabilities $p_1$ and $p_2$, when \textbf{no causal information} is available, $(R,D)$ is achievable if and only if
\begin{eqnarray*}
R&<&\frac{1-h(p_1)}{h(p)},\\
D&\leq& p.
\end{eqnarray*}
\end{cor}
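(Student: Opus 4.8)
The plan is to obtain Corollary~\ref{bsc-nocausal} as a direct specialization of Theorem~\ref{nocausal}. That theorem already delivers the full ``if and only if'' region once three ingredients are evaluated for the concrete setting at hand: the source entropy $H(S)$, the point-to-point capacity $\max_X I(X;Y)$ of Bob's channel, and the maximum distortion $\Delta$; in addition one must confirm that the hypothesis $I(W;Y)-I(W;Z)>0$ holds for some $W\inout X\inout YZ$. No new coding machinery is needed, so all of the work is in these single-letter evaluations.

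First I would compute the three quantities. Since $S\sim\text{Bern}(p)$, the source entropy is $H(S)=h(p)$. Bob's channel $P_{Y|X}$ is a binary symmetric channel with crossover probability $p_1$, whose capacity is $\max_X I(X;Y)=1-h(p_1)$, attained by the uniform input (for which $H(Y)=1$ and $H(Y|X)=h(p_1)$). For the maximum distortion under Hamming loss, the two constant guesses give $\mathbb{E}[d_H(S,0)]=\mathbb{P}[S=1]=p$ and $\mathbb{E}[d_H(S,1)]=\mathbb{P}[S=0]=1-p$, so that $\Delta=\min(p,1-p)=p$ because $p\le 0.5$.

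Next I would verify the positivity hypothesis, which is the only point that demands any care. Taking $W=X$ with $X$ uniform Bernoulli, the two sub-channels are $\text{BSC}(p_1)$ and $\text{BSC}(p_2)$, so $I(X;Y)-I(X;Z)=\bigl(1-h(p_1)\bigr)-\bigl(1-h(p_2)\bigr)=h(p_2)-h(p_1)$. Because $h$ is strictly increasing on $[0,1/2]$ and $p_1<p_2<0.5$, this difference is strictly positive, so the hypothesis of Theorem~\ref{nocausal} is met; this is consistent with the degradedness of the BSBCC already noted.

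Finally, substituting $H(S)=h(p)$, $\max_X I(X;Y)=1-h(p_1)$, and $\Delta=p$ into the bounds $(\ref{ineq_r})$ and $(\ref{ineq_d})$ of Theorem~\ref{nocausal} yields precisely $R<\frac{1-h(p_1)}{h(p)}$ and $D\le p$, and the ``if and only if'' character of the corollary is inherited directly from the theorem. I expect no genuine obstacle beyond the positivity check; the remaining steps are routine evaluations of capacity and of $\Delta$ for a binary source.
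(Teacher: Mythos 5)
Your proposal is correct and follows exactly the route the paper intends: the paper states that Corollary~\ref{bsc-nocausal} ``results directly from applying Theorem~\ref{nocausal}'' and leaves the single-letter evaluations implicit, which you have simply carried out --- $H(S)=h(p)$, $\max_X I(X;Y)=1-h(p_1)$, $\Delta=\min(p,1-p)=p$, and the positivity check $I(X;Y)-I(X;Z)=h(p_2)-h(p_1)>0$ with $W=X$ uniform. Nothing is missing; your verification of the hypothesis $I(W;Y)-I(W;Z)>0$ is the one step the paper glosses over, and it is handled correctly.
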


\begin{cor} \label{cor-causal}
For an i.i.d. Bern$(p)$ source sequence $S^k$ and BSBCC with crossover probabilities $p_1$ and $p_2$, when \textbf{causal information} is available, $(R,D)$ is achievable if
\begin{eqnarray*}
&&R\leq \frac{h(p_2)-h(p_1)}{h(p)},\\
&&D\leq p\\
\text{or}\\
&&\frac{h(p_2)-h(p_1)}{h(p)}< R \leq \frac{1-h(p_1)}{h(p)} ,\\
&&D\leq \alpha'p+(1-\alpha') d\left(\frac{h(\gamma*p_1)-h(\gamma*p_2)-h(p_1)+h(p_2)}{R}\right)
\end{eqnarray*}
where $\gamma\in[0,0.5]$ solves $h(\gamma*p_2)=1-h(p_1)+h(p_2)-Rh(p)$ and $\alpha'=\frac{h(\gamma*p_2)-h(\gamma*p_1)}{1-h(\gamma*p_1)}$.
\end{cor}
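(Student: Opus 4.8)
The plan is to specialize Theorem \ref{rdcausal} to the binary source and the binary symmetric broadcast channel by making the canonical superposition choices for the channel auxiliaries $V,W,X$ and the distortion-optimal choice for the source auxiliary $U$, and then to reduce the resulting parametric region to the stated closed form. Since the corollary only asserts ``$(R,D)$ is achievable if,'' it suffices to exhibit one admissible choice of $(P_{U|S},V,W,X)$ for each $(R,D)$ in the region.

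First I would fix the channel auxiliaries. Take $V\sim\mathrm{Bern}(1/2)$ and set $W=X=V\oplus B$ with $B\sim\mathrm{Bern}(\gamma)$ independent of $V$, where $\gamma\in[0,1/2]$ is a parameter to be pinned down; write $Y=X\oplus N_1$ and $Z=X\oplus N_2$ with $N_1\sim\mathrm{Bern}(p_1)$, $N_2\sim\mathrm{Bern}(p_2)$. A direct computation then gives
\begin{align*}
I(V;Y)&=1-h(\gamma*p_1), & I(V;Z)&=1-h(\gamma*p_2),\\
I(W;Y|V)&=h(\gamma*p_1)-h(p_1), & I(W;Z|V)&=h(\gamma*p_2)-h(p_2),
\end{align*}
so the public rate is $I(V;Y)=1-h(\gamma*p_1)$ and the secure rate is $I(W;Y|V)-I(W;Z|V)=h(\gamma*p_1)-h(\gamma*p_2)-h(p_1)+h(p_2)$, which is nonnegative by the assumed degradedness $p_1<p_2<1/2$. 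On the source side, for a target entropy budget $H(S|U)$ I would choose the $U$ that maximizes the forcible distortion at that budget; for the binary source and Hamming loss this optimized value is $\max_{P_{U|S}:\,H(S|U)\text{ fixed}}\min_{t(u)}\mathbb{E}[d(S,t(U))]=d(H(S|U))$, the distortion-versus-disclosure tradeoff encoded by $(\ref{def_f})$--$(\ref{def_d})$ (the causal source-coding component, cf.\ \cite{schieler}, \cite{cuff-globecom}). Thus $\min_{t(u)}\mathbb{E}[d(S,t(U))]=d(H(S|U))$ and $I(S;U)=h(p)-H(S|U)$.

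Next I would impose both rate constraints of Theorem \ref{rdcausal} with equality, the natural operating point that rate-matches the $U$-description to the public channel and the residual $H(S|U)$ to the secure channel:
\begin{align*}
R=\frac{1-h(\gamma*p_1)}{h(p)-H(S|U)},\qquad R=\frac{h(\gamma*p_1)-h(\gamma*p_2)-h(p_1)+h(p_2)}{H(S|U)}.
\end{align*}
Clearing denominators and adding eliminates $H(S|U)$ and yields $Rh(p)=1-h(\gamma*p_2)-h(p_1)+h(p_2)$, i.e.\ the defining equation $h(\gamma*p_2)=1-h(p_1)+h(p_2)-Rh(p)$; the second equality then identifies $H(S|U)=\big(h(\gamma*p_1)-h(\gamma*p_2)-h(p_1)+h(p_2)\big)/R$, exactly the argument of $d(\cdot)$ in the corollary. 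Substituting $\alpha=[I(V;Y)-I(V;Z)]^+/I(S;U)=\big(h(\gamma*p_2)-h(\gamma*p_1)\big)/I(S;U)$ into $D=\tfrac{\alpha}{R}\Delta+(1-\tfrac{\alpha}{R})\min_{t(u)}\mathbb{E}[d(S,t(U))]$, and using $R\,I(S;U)=1-h(\gamma*p_1)$ together with $\Delta=p$, collapses $\alpha/R$ to $\alpha'=\big(h(\gamma*p_2)-h(\gamma*p_1)\big)/\big(1-h(\gamma*p_1)\big)$ and produces the stated distortion. The low-rate regime is the endpoint obtained by taking $V$ and $U$ constant with $X\sim\mathrm{Bern}(1/2)$: then $I(V;Y)-I(V;Z)=0$ forces $\alpha=0$, the secure rate is $I(X;Y)-I(X;Z)=h(p_2)-h(p_1)$ so $R\le\frac{h(p_2)-h(p_1)}{h(p)}$, and the distortion term is $\Delta=p$, giving $D\le p$; this is the continuous $\gamma=1/2$ limit of the regime-2 parametrization.

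The step I expect to be the main obstacle is the source-coding identity $\min_{t(u)}\mathbb{E}[d(S,t(U))]=d(H(S|U))$: one must verify both that no $U$ at entropy budget $H(S|U)$ can force more than $d(H(S|U))$ and that this value is attainable by a $U$ compatible with the rate constraints, which rests on the distortion--disclosure characterization behind $(\ref{def_f})$--$(\ref{def_d})$. The remaining work is bookkeeping: checking that as $R$ sweeps $\big(\frac{h(p_2)-h(p_1)}{h(p)},\frac{1-h(p_1)}{h(p)}\big]$ the solution $\gamma$ stays in $[0,1/2]$ with $H(S|U)\in[0,h(p)]$ and $\alpha'\in[0,1]$, and confirming continuity at the regime boundary, where $\gamma=1/2$ and $d(h(p))=\Delta=p$ (since $f(h(p))\ge p$ for a $\mathrm{Bern}(p)$ source with $p\le 1/2$), so that the indeterminate $\alpha'$ is immaterial because both distortion components equal $p$.
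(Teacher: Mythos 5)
Your proposal is correct and follows essentially the same route as the paper, which proves the corollary simply by specializing Theorem \ref{rdcausal} with the standard BSC superposition auxiliaries (as in Theorem 7 of \cite{allerton}): your choices $V\sim\mathrm{Bern}(1/2)$, $W=X=V\oplus\mathrm{Bern}(\gamma)$, the optimized $U$ giving $\min_{t(u)}\mathbb{E}[d(S,t(U))]=d(H(S|U))$, and the rate-matching of both constraints reproduce exactly the defining equation for $\gamma$, the weight $\alpha'$, and the two regimes of the stated region. The paper itself omits these computations, and your reconstruction of them, including the degenerate low-rate endpoint and the boundary continuity check, is sound.
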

These corollaries result directly from applying Theorem \ref{nocausal} and Theorem \ref{rdcausal}, respectively. The region given in Corollary \ref{cor-causal} is calculated in a similar fashion as the region given by Theorem 7 of \cite{allerton}. An numerical example with $p=0.3$, $p_1=0.1$ and $p_2=0.2$ is plotted in Fig.\ref{bsc-ex}. Interpretation of the plot is deferred until the end of Section \ref{numerical}.

\begin{figure}[htbp]
  \centering
  \includegraphics[width=10 cm]{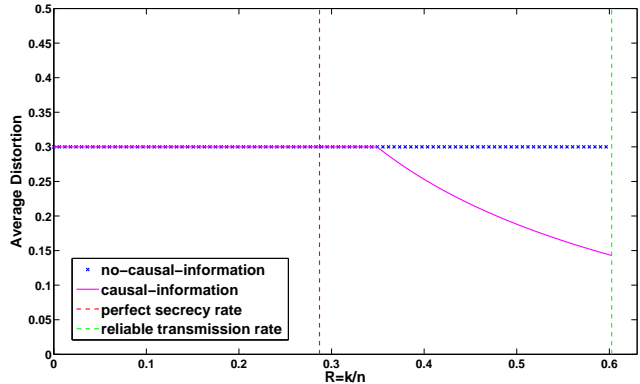}
\caption{Achievable distortion-rate curves. On the horizontal axis is the symbol/channel use source-channel coding rate and on the vertical axis is the average Hamming distortion.} 
\label{bsc-ex}
\end{figure}

\section{MMF Main Results} \label{main-results}
\subsection{Fixed MMF Channel}
We now apply the above result to the MMF model introduced in Section \ref{mmf-model-section} by finding the rate distortion regions for the MMF model defined in $(\ref{bob})$ and $(\ref{eve})$ under the two scenarios. In this section, as before, we assume the channels are time-invariant. First of all, we will give the achievable rate region under strong secrecy (therefore also under weak secrecy). 
\begin{thm}\label{channelmmf}
The following rate region for one confidential and one non-confidential message is achievable under strong secrecy for a complex Gaussian channel:
\begin{eqnarray}
R_s&\leq& \log \frac{|HKH^{\dagger}+\sigma_N^2I|}{|\sigma_N^2I|}-\log \frac{|H^eK{H^e}^\dagger+\sigma_{N^e}^2I|}{|\sigma_{N^e}^2I|} \label{rs_gau}\\
R_p&\leq& \log\frac{|HQH^{\dagger}+\sigma_N^2I|}{|HKH^{\dagger}+\sigma_N^2I|} \label{rp_gau}
\end{eqnarray}
for some $K$ and $Q$, where $0\preceq K\preceq Q$, $K\in\mathcal{H}^{M\times M}$, $Q$ satisfies the power constraint in $(\ref{powergen})$, and $H$ and $H^e$ are the channel gain matrices.
\end{thm}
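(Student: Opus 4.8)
The plan is to specialize the general discrete-memoryless achievability region of Theorem \ref{region} to jointly Gaussian auxiliaries matched to the MIMO channel, and then to invoke Theorem \ref{thmstrong} to upgrade weak secrecy to strong secrecy. Recall that Theorem \ref{region} guarantees that every pair $(R_s,R_p)$ with $R_s \leq I(W;Y|V)-I(W;Z|V)$ and $R_p \leq I(V;Y)$ is achievable under weak secrecy for some $V \inout W \inout X \inout YZ$, and Theorem \ref{thmstrong} asserts that any such pair is also achievable under strong secrecy. Hence it suffices to exhibit a single admissible triple $(V,W,X)$ for which these two mutual informations reduce exactly to the right-hand sides of $(\ref{rs_gau})$ and $(\ref{rp_gau})$.

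First I would take a superposition-style Gaussian input. Fix Hermitian matrices with $0 \preceq K \preceq Q$ and $Q$ meeting the power constraint $(\ref{powergen})$. Let $V \sim \mathcal{CN}(0,Q-K,0)$ and $W' \sim \mathcal{CN}(0,K,0)$ be independent circularly-symmetric complex Gaussians, and set $X = W = V + W'$, so that $X \sim \mathcal{CN}(0,Q,0)$. The chain $V \inout W \inout X \inout YZ$ holds because $X$ is a deterministic function of $W$ and $(Y,Z)$ depends on $(V,W)$ only through $X$. The hypothesis $K \preceq Q$ is precisely what makes $Q-K \succeq 0$, so that $V$ has a legitimate covariance, while $X$ inherits the admissible input covariance $Q$.

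The computation is then routine. Using $h(\cdot)=\log\det(\pi e \Sigma)$ for the entropy of a circularly-symmetric complex Gaussian and the cancellation of the $(\pi e)$ factors in each difference, I would evaluate $I(V;Y)=h(Y)-h(Y|V)$ with $Y \sim \mathcal{CN}(0,HQH^\dagger+\sigma_N^2 I,0)$ and $Y \mid V \sim \mathcal{CN}(\cdot,HKH^\dagger+\sigma_N^2 I,0)$, which yields exactly $(\ref{rp_gau})$; and $I(W;Y|V)-I(W;Z|V)$, where conditioning on $W=X$ collapses the conditional covariances of $Y$ and $Z$ to the noise covariances $\sigma_N^2 I$ and $\sigma_{N^e}^2 I$, producing exactly $(\ref{rs_gau})$.

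The main obstacle is not the algebra but the passage from the discrete-memoryless setting, in which Theorems \ref{region} and \ref{thmstrong} were established, to the continuous-alphabet Gaussian channel under the per-mode power constraint $(\ref{powergen})$. I would handle this by a standard discretization argument: approximate the Gaussian input and channel law by finite-alphabet versions, apply the discrete region, and pass to the limit so that the mutual-information terms and the secrecy guarantee converge; the power constraint is satisfied in expectation by the Gaussian input, and the usual truncation together with a typicality argument enforces it on the codewords. Care is needed to ensure that the strong-secrecy conclusion of Theorem \ref{thmstrong}, which rests on a total-variation (channel-resolvability) bound, survives this limiting procedure, and this is where I would expect to spend most of the effort.
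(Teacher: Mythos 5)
Your proposal is correct and follows essentially the same route as the paper: the paper likewise invokes Theorems \ref{region} and \ref{thmstrong}, makes the identical superposition choice $V\sim\mathcal{CN}(0,Q-K,0)$, $B\sim\mathcal{CN}(0,K,0)$ independent, $W=X=V+B$, and reads off $(\ref{rs_gau})$ and $(\ref{rp_gau})$ from the resulting Gaussian mutual informations. The only difference is presentational: where you spell out the log-det computation and the discretization/power-constraint technicalities for the continuous alphabet, the paper compresses these into a citation of \cite{mimo-gaussian}.
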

\begin{IEEEproof}
According to Theorem \ref{region} and \ref{thmstrong},
\begin{eqnarray}
R_s&\leq& I(W;Y|V)-I(W;Z|V) \label{rs_gen}\\
R_p&\leq& I(V;Y) \label{rp_gen}
\end{eqnarray}
for some $V\inout W \inout X \inout YZ$ and $\mathbb{E}[XX^{\dagger}] \preceq Q$, is an achievable rate pair.

We restrict the channel input $X$ to be a circularly symmetric complex Gaussian vector. 
Let $V\sim \mathcal{CN}(0,Q-K,0)$, $B\sim \mathcal{CN}(0,K,0)$ such that $B$ and $V$ are independent, and $W=X=V+B$.
Therefore, $X\sim \mathcal{CN}(0,Q,0)$ satisfies the power constraint. Similar to results in \cite{mimo-gaussian}, the rate pair $(R_s, R_p)$ satisfying inequalities $(\ref{rs_gau})$ and $(\ref{rp_gau})$ can be achieved.
\end{IEEEproof}
An immediate corollary follows directly from the above theorem.
\begin{cor} \label{mmfrate}
The following rate pairs are achievable under strong secrecy for MMF with channel gains defined in $(\ref{h})$ and $(\ref{he})$ and equal full power allocation $Q=I$:
\begin{eqnarray}
R_s&\leq& \log \frac{|\text{SNR}K+I|}{|\text{SNR}^e\Psi^eK{\Psi^e}^{\dagger}\Phi+I|} \label{rs_mult}\\
R_p&\leq& \log \frac{|(\text{SNR}+1)I|}{|\text{SNR}K+I|} \label{rp_mult}
\end{eqnarray}
for some $K$ where $0\preceq K\preceq I$, $K\in\mathcal{H}^{M\times M}$, $\text{SNR}=E_0L/\sigma_N^2$ and $\text{SNR}^e=E_0L^e/\sigma_{N^e}^2$.
\end{cor}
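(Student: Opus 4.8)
The plan is to specialize the general Gaussian region of Theorem \ref{channelmmf} to the MMF channel by substituting the structured matrices $H=\sqrt{E_0L}\,\Psi$ and $H^e=\sqrt{E_0L^e}\,\sqrt{\Phi}\,\Psi^e$ from $(\ref{h})$ and $(\ref{he})$, together with the full-power choice $Q=I$, into the bounds $(\ref{rs_gau})$ and $(\ref{rp_gau})$, and then simplify the resulting determinant ratios. Since Theorem \ref{channelmmf} already guarantees that every pair satisfying $(\ref{rs_gau})$--$(\ref{rp_gau})$ is achievable under strong secrecy for any admissible $K$ with $0\preceq K\preceq Q$, no new coding argument is needed; the corollary is purely a computation, once we verify that the specialized expressions coincide with $(\ref{rs_mult})$ and $(\ref{rp_mult})$ and that $0\preceq K\preceq Q$ becomes $0\preceq K\preceq I$.

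First I would dispatch the terms that involve only Bob's channel, using the unitarity of $\Psi$. Writing $HKH^{\dagger}=E_0L\,\Psi K\Psi^{\dagger}$ and dividing by $|\sigma_N^2 I|=(\sigma_N^2)^M$ gives $|HKH^{\dagger}+\sigma_N^2 I|/(\sigma_N^2)^M=|\text{SNR}\,\Psi K\Psi^{\dagger}+I|$, and pulling $\Psi$ out via $\text{SNR}\,\Psi K\Psi^{\dagger}+I=\Psi(\text{SNR}\,K+I)\Psi^{\dagger}$ together with $|\Psi||\Psi^{\dagger}|=1$ reduces this to $|\text{SNR}\,K+I|$. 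This handles the first determinant of $(\ref{rs_gau})$ and the denominator of $(\ref{rp_gau})$. For the numerator of $(\ref{rp_gau})$, setting $Q=I$ yields $HQH^{\dagger}=E_0L\,\Psi\Psi^{\dagger}=E_0L\,I$, so $|HH^{\dagger}+\sigma_N^2 I|/(\sigma_N^2)^M=(\text{SNR}+1)^M=|(\text{SNR}+1)I|$. Collecting these gives $(\ref{rp_mult})$ immediately.

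The one step requiring a little care, and the main obstacle, is Eve's determinant in $(\ref{rs_gau})$. Substituting $H^e$ and dividing by $|\sigma_{N^e}^2 I|$ produces $|\text{SNR}^e\,\sqrt{\Phi}\,\Psi^e K{\Psi^e}^{\dagger}\sqrt{\Phi}+I|$ (using that $\Phi$ is diagonal with positive entries, so $\sqrt{\Phi}$ is real and $\sqrt{\Phi}^{\dagger}=\sqrt{\Phi}$), whereas $(\ref{rs_mult})$ asks for $|\text{SNR}^e\,\Psi^e K{\Psi^e}^{\dagger}\Phi+I|$. These agree by Sylvester's determinant identity $|AB+I|=|BA+I|$ applied with $A=\sqrt{\Phi}$ and $B=\text{SNR}^e\,\Psi^e K{\Psi^e}^{\dagger}\sqrt{\Phi}$, since then $AB$ is the symmetrically conjugated form while $BA=\text{SNR}^e\,\Psi^e K{\Psi^e}^{\dagger}\Phi$. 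Unlike the $\Psi$ terms, here $\Psi^e$ cannot simply be factored out, because the noise covariance is not rotated along with the $\sqrt{\Phi}$ weighting; so the determinant identity, rather than unitary invariance, is the right tool.

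Putting the three pieces together, the first bound becomes $R_s\le\log|\text{SNR}\,K+I|-\log|\text{SNR}^e\,\Psi^e K{\Psi^e}^{\dagger}\Phi+I|$, i.e. $(\ref{rs_mult})$, and the second becomes $(\ref{rp_mult})$; the admissibility region $0\preceq K\preceq Q$ specializes to $0\preceq K\preceq I$ with $K\in\mathcal{H}^{M\times M}$. Since Theorem \ref{channelmmf} already certifies achievability under strong secrecy for each such $K$, this completes the argument.
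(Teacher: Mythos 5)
Your proposal is correct and follows exactly the route the paper intends: the paper states that Corollary \ref{mmfrate} ``follows directly'' from Theorem \ref{channelmmf}, and your computation -- factoring out the unitary $\Psi$ for Bob's determinants, noting $HQH^{\dagger}=E_0L\,I$ when $Q=I$, and reconciling the symmetric form $|\text{SNR}^e\sqrt{\Phi}\Psi^eK{\Psi^e}^{\dagger}\sqrt{\Phi}+I|$ with the stated $|\text{SNR}^e\Psi^eK{\Psi^e}^{\dagger}\Phi+I|$ via Sylvester's identity $|AB+I|=|BA+I|$ -- is precisely the substitution the paper leaves implicit (and is consistent with the symmetric form the paper itself uses in Theorem \ref{mmfrdcausal}). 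You have simply made the ``immediate'' step explicit, correctly.
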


With the secrecy capacity region of MMF, we can evaluate its rate distortion region $(R,D)$ under the two extreme cases, without and with causal information at Eve's decoder respectively. For the best case scenario (no causal information), we will give a sufficient condition to force maximum distortion $\Delta$ between Alice and Eve. For the worst case scenario (with causal information), we will give an achievable rate-distortion region and look at the particular case of Hamming distortion.

\begin{thm}\label{sufficient}
For an i.i.d source sequence $S^k$, if 
\begin{equation}
\min_{j\in\{1,...,M\}}\bar{\phi}_j<\frac{\text{SNR}}{\text{SNR}^e}
\label{eq:scsc}
\end{equation} 
where $\bar{\phi}_j$'s are the diagonal entries of $\Phi$,
 then the following rate distortion pair $(R,D)$ is achievable with \textbf{no causal information} at the eavesdropper:
\begin{eqnarray}
R&<&\frac{ M\log(\text{SNR}+1)}{H(S)}\\
D&\leq&\Delta.
\end{eqnarray}
\end{thm}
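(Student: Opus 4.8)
The plan is to obtain Theorem~\ref{sufficient} as a direct specialization of Theorem~\ref{nocausal} to the MMF broadcast channel $(\ref{bob})$--$(\ref{eve})$. Theorem~\ref{nocausal} needs two inputs: the point-to-point capacity $\max_X I(X;Y)$ of Bob's channel, which supplies the rate bound, and the existence of some $W\inout X\inout YZ$ with $I(W;Y)-I(W;Z)>0$, which certifies a secure channel of positive rate so that Eve's distortion is forced to $\Delta$. I would verify each ingredient in turn, working throughout with circularly symmetric Gaussian inputs $X\sim\mathcal{CN}(0,K_X,0)$ whose covariance obeys the power constraint $K_X\preceq Q=I$, exactly as in the proof of Theorem~\ref{channelmmf}.

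For the capacity, I would write $I(X;Y)=\log\frac{|HK_XH^\dagger+\sigma_N^2I|}{|\sigma_N^2I|}$ and substitute $H=\sqrt{E_0L}\,\Psi$. Since $\Psi$ is unitary, $|\Psi(\text{SNR}\,K_X)\Psi^\dagger+I|=|\text{SNR}\,K_X+I|$, hence $I(X;Y)=\log|\text{SNR}\,K_X+I|$. The determinant is monotone on the positive semidefinite cone, and $K_X\preceq I$ forces $\text{SNR}\,K_X+I\preceq(\text{SNR}+1)I$, so the maximum is attained at $K_X=I$ and equals $\log|(\text{SNR}+1)I|=M\log(\text{SNR}+1)$. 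This gives the claimed bound $R<\frac{M\log(\text{SNR}+1)}{H(S)}$.

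The substantive step, which I expect to be the main obstacle, is to extract a positive secrecy rate from the hypothesis $\min_j\bar{\phi}_j<\text{SNR}/\text{SNR}^e$. I would take $W=X\sim\mathcal{CN}(0,K_X,0)$, so that $I(W;Y)-I(W;Z)=I(X;Y)-I(X;Z)$ equals the right-hand side of $(\ref{rs_mult})$, namely $\log|\text{SNR}\,K_X+I|-\log|\text{SNR}^e\Psi^eK_X{\Psi^e}^\dagger\Phi+I|$; here the denominator is put in this form using the unitary factor of $H^e$ and the identity $|I+AB|=|I+BA|$. Let $j^\star$ attain $\min_j\bar{\phi}_j$ and let $e_{j^\star}$ denote the $j^\star$-th standard basis vector of $\mathbb{C}^M$. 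I would choose the rank-one beamforming covariance $K_X=\epsilon\,{\Psi^e}^\dagger e_{j^\star}e_{j^\star}^\dagger\Psi^e$ with $0<\epsilon\le1$, which equals $\epsilon$ times a rank-one projection and therefore satisfies $K_X\preceq I$. Then $\Psi^eK_X{\Psi^e}^\dagger=\epsilon\,e_{j^\star}e_{j^\star}^\dagger$ picks out the $j^\star$-th diagonal entry of $\Phi$, both determinants collapse to rank-one updates of the identity, and $I(X;Y)-I(X;Z)=\log(1+\text{SNR}\,\epsilon)-\log(1+\text{SNR}^e\,\epsilon\,\bar{\phi}_{j^\star})$.

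This difference is strictly positive exactly when $\bar{\phi}_{j^\star}<\text{SNR}/\text{SNR}^e$, which is the hypothesis $(\ref{eq:scsc})$; hence the condition $I(W;Y)-I(W;Z)>0$ of Theorem~\ref{nocausal} holds. Combining the two parts, Theorem~\ref{nocausal} then delivers the achievability of $(R,D)$ with $R<\frac{M\log(\text{SNR}+1)}{H(S)}$ and $D\le\Delta$. The care in the argument lies entirely in the determinant manipulations---aligning the beamforming direction with Eve's receive basis so that the mode-dependent loss $\bar{\phi}_{j^\star}$ appears cleanly, and checking $K_X\preceq I$---rather than in any new idea beyond Theorem~\ref{nocausal} and Corollary~\ref{mmfrate}.
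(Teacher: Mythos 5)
Your proposal is correct and takes essentially the same approach as the paper: the paper also specializes Theorem~\ref{nocausal} via Corollary~\ref{mmfrate}, and in Appendix~\ref{app-suff} it restricts $K$ to ${\Psi^e}^{\dagger}\Lambda\Psi^e$ with diagonal $\Lambda$, choosing $\lambda_{j}=1$ at the mode attaining $\min_j\bar{\phi}_j$ and $\lambda_i=0$ elsewhere---precisely your rank-one beamforming covariance with $\epsilon=1$. Your explicit determinant computation and the capacity bound $\max_X I(X;Y)=M\log(\text{SNR}+1)$ match the paper's argument.
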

Theorem \ref{sufficient} follows from Theorem \ref{nocausal} and Corollary \ref{mmfrate}. Note that (\ref{eq:scsc})
is a sufficient condition for the existence of a secure channel with strictly positive rate from Alice to Bob. A discussion of this condition is provided in Appendix \ref{app-suff}.

\begin{thm}\label{mmfrdcausal}
For an i.i.d. source sequence $S^k$ and Hamming distortion, the following distortion rate curve $D(R)$ is in the achievable region \textbf{with causal information} at the eavesdropper:
\begin{eqnarray}
D&=&d(H(S)), \text{ if } R\leq \frac{R_s^*}{H(S)}\\
D&=&\bar{\alpha}(K) \Delta+\left(1-\bar{\alpha}(K)\right) d\left(\frac{R_s(K)}{R}\right), \text{ if } \frac{R_s^*}{H(S)}<R\leq\frac{R_p^*}{H(S)}
\end{eqnarray}
where $d(\cdot)$ is as defined in $(\ref{def_d})$; $\mathcal{K}\triangleq\{K\in\mathcal{H}^{M\times M}, 0\preceq K\preceq I\}$,
$$R_s^*=\max_{K'\in\mathcal{K}}\log\frac{|\text{SNR}K'+I|}{|\text{SNR}^e\sqrt{\Phi}\Psi^eK'{\Psi^e}^{\dagger}\sqrt{\Phi}+I|},$$
$$R_p^*=M\log(\text{SNR}+1),$$
$$R_s(K)=\log\frac{|\text{SNR}K+I|}{|\text{SNR}^e\sqrt{\Phi}\Psi^eK{\Psi^e}^\dagger\sqrt{\Phi}+I|},$$
$$\bar{\alpha}(K)=\frac{\bar{\beta}(K)-\bar{\gamma}(K)}{\bar{\beta}(K)},$$
$$\bar{\beta}(K)=\log\frac{|(\text{SNR}+1)I|}{|\text{SNR}K+I|},$$
$$\bar{\gamma}(K)=\log\frac{|\text{SNR}^e\Phi+I|}{|\text{SNR}^e\sqrt{\Phi}\Psi^eK{\Psi^e}^\dagger\sqrt{\Phi}+I|}.$$
\end{thm}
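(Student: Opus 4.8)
The plan is to specialize the general causal-information result of Theorem \ref{rdcausal} to the complex Gaussian MMF channel of Section \ref{mmf-model-section} under Hamming distortion, exactly mirroring how Corollary \ref{cor-causal} was derived from Theorem \ref{rdcausal} in the binary example. Theorem \ref{rdcausal} gives an achievable $(R,D)$ pair in terms of auxiliary variables $V \inout W \inout X \inout YZ$ and a source-side variable $U$, with the distortion expressed as a convex combination $D \leq \frac{\alpha}{R}\Delta + (1-\frac{\alpha}{R})\min_{t(u)}\mathbb{E}[d(S,t(U))]$ where $\alpha = \frac{[I(V;Y)-I(V;Z)]^+}{I(S;U)}$ and $R \leq \min\bigl(\frac{I(V;Y)}{I(S;U)}, \frac{I(W;Y|V)-I(W;Z|V)}{H(S|U)}\bigr)$. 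First I would make the source-coding choice trivial by taking $U$ constant, so that $I(S;U)=0$ is avoided by instead matching the source description exactly to the rate; this reduces the distortion expression to $D = \bar\alpha \Delta + (1-\bar\alpha)\,d(\cdot)$ with the function $d(\cdot)$ from $(\ref{def_d})$ appearing because, for Hamming distortion, the optimal estimate of the residual uncompressed portion of the source is governed by the interpolation bound $d(\cdot)$ that already showed up in Corollary \ref{cor-causal}.

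Next I would carry out the channel-side optimization. Using the Gaussian input assignment from the proof of Theorem \ref{channelmmf}, namely $V \sim \mathcal{CN}(0,Q-K,0)$, $B \sim \mathcal{CN}(0,K,0)$ independent with $W=X=V+B$ and $Q=I$, I would evaluate each mutual-information term in closed form via the usual Gaussian $\log\det$ formulas. This yields precisely the quantities defined in the statement: $I(W;Y|V)-I(W;Z|V)$ becomes $R_s(K)=\log\frac{|\mathrm{SNR}\,K+I|}{|\mathrm{SNR}^e\sqrt{\Phi}\Psi^e K {\Psi^e}^\dagger\sqrt{\Phi}+I|}$ (matching $(\ref{rs_mult})$ from Corollary \ref{mmfrate}), while $I(V;Y)=\bar\beta(K)$ and $I(V;Z)=\bar\gamma(K)$ give the two $\log\det$ ratios in the statement. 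The splitting parameter then reads $\bar\alpha(K)=\frac{[I(V;Y)-I(V;Z)]^+}{I(V;Y)}=\frac{\bar\beta(K)-\bar\gamma(K)}{\bar\beta(K)}$, consistent with taking the source description rate to equal $I(V;Y)$. The two regimes in the statement correspond to whether the target rate $R$ lies below $R_s^\ast/H(S)$ (where the fully secure rate $R_s^\ast=\max_{K'}R_s(K')$ suffices to protect the entire description, giving maximum distortion $D=d(H(S))=\Delta$) or between $R_s^\ast/H(S)$ and $R_p^\ast/H(S)=M\log(\mathrm{SNR}+1)/H(S)$ (where only partial protection is possible and the convex-combination distortion applies).

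The main obstacle I anticipate is the rate-matching bookkeeping that links the source-coding rate to the channel-coding rates, precisely the step that makes the argument ``similar in fashion to Theorem 7 of \cite{allerton}.'' One must verify that for each target $R$ in the intermediate regime there is a feasible $K\in\mathcal{K}$ such that the source description splits into a confidential part carried by the secure rate $R_s(K)$ and a non-confidential part carried by $R_p$, with the overall source-channel rate $R=k/n$ matching $R_p^\ast/H(S)$ at the boundary and the confidential fraction $\bar\alpha(K)$ correctly tracking $R_s(K)/R$ inside the argument of $d(\cdot)$. Establishing that this matching is achievable uniformly over the regime, and that the $[\cdot]^+$ in $\alpha$ is inactive so that $\bar\alpha(K)$ is the genuine fraction, is where the care lies; the individual $\log\det$ evaluations are routine once the Gaussian assignment is fixed. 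I would therefore present the Gaussian substitutions compactly and devote most of the writing to the rate-matching correspondence, citing the analogous construction in \cite{allerton} for the combinatorial part.
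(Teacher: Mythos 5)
Your proposal is correct and takes essentially the same route as the paper, whose entire proof is the remark that the theorem ``can be derived directly from Theorem \ref{rdcausal} and Corollary \ref{mmfrate}'': you fill in exactly the intended details, namely the Gaussian assignment $V\sim\mathcal{CN}(0,I-K,0)$, $W=X=V+B$ from Theorem \ref{channelmmf}, the $\log\det$ identifications $\bar{\beta}(K)=I(V;Y)$, $\bar{\gamma}(K)=I(V;Z)$, $R_s(K)=I(W;Y|V)-I(W;Z|V)$, and the rate matching $I(S;U)=I(V;Y)/R$, $H(S|U)=R_s(K)/R$ in the style of Theorem 7 of \cite{allerton}, which yields $\alpha/R=\bar{\alpha}(K)$ and the argument $R_s(K)/R$ inside $d(\cdot)$. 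One caveat: your opening remark about ``taking $U$ constant'' is misstated---in the intermediate regime $U$ must be a nontrivial partial description of $S$ (a constant $U$ gives $I(S;U)=0$ and makes $\alpha$ undefined)---but your subsequent discussion of matching the source description rate to $I(V;Y)$ shows you make the correct choice, so this is a slip of phrasing rather than a gap.
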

The result given in Theorem \ref{mmfrdcausal} can be derived directly from Theorem \ref{rdcausal} and Corollary \ref{mmfrate}.

\subsection{Secrecy Outage under Random MMF Channel} \label{mmf-random}
All the results we have seen thus far were derived for a time-invariant channel, which means that both the transmitter and the receivers are informed about the channel state. 
However, in MMF, the channels ${H}$ and ${H}^e$ vary with time and the CSI is not available at the transmitter due to the long round-trip delay over the large distances common in optical transmission, even though it has a long coherence time, i.e. $H$ and $H^e$ are essentially constant over $n$ channel uses. In Corollary \ref{cor-causal}, Theorem \ref{sufficient} and Theorem \ref{mmfrdcausal}, we have chosen $Q=I$ as the channel input power for simplicity. This power allocation strategy also tends to minimize the outage probability for perfect secrecy \cite{asilomar}.
 
 The randomness of $H=\sqrt{E_0L}\Psi$ and $H^e=\sqrt{E_0L^e}\sqrt{\Phi}\Psi^e$ comes from the unitary component $\Psi$, $\Psi^e$ and the diagonal component $\Phi$. The random matrices $\Psi$ and $\Psi^e$ are uniformly distributed in $\mathcal{\Psi}$, where $\mathcal{\Psi}$ is the set of all $M\times M$ unitary matrices \cite{winzer-express}. 
The diagonal matrix $\Phi=diag\{\bar\phi_1, ..., \bar\phi_M\}$, where $\bar\phi_i=M\frac{\phi_i}{\sum_{j=1}^M \phi_j}$. Here $\phi_1=\phi_{min}$ and $\phi_{max}$, and $\phi_i \sim Unif[\phi_{min},\phi_{max}]$ for $i=3,...,M$.

In this situation of no CSI at the transmitter with long coherence time, performance is typically measured by outage probability. The capacity $C=\max_Q \log|I+HQH^{\dagger}|$ and secrecy capacity $C_s=\max_Q \left[\log|I+HQH^{\dagger}|-\log|I+H^eQ{H^e}^{\dagger}|\right]$ for a deterministic MIMO Gaussian broadcast channel were given in \cite{emre} and \cite{ogg}, respectively. For the case of no causal information at the eavesdropper, the CSI does not really affect the performance much. The channel capacity between Alice and Bob $M\log(1+\text{SNR})$ does not depend on the channel realization due to the unitary component of the channel. Hence, the encoder can choose the source-channel coding rate to be just below $\frac{M\log(\text{SNR}+1)}{H(S)}$, and maximum distortion can be achieved if the channel satisfies the condition in Theorem \ref{sufficient}. For the case where Eve has causal source information, we consider only the input power $Q=I$ and Hamming distortion. Without knowledge of CSI, the source-channel encoder picks a source-channel coding rate $\bar{R}$, a pair of source coding rates $(\bar{R}_s', \bar{R}_p')$ on the line segment $R_s'+R_p'=H(S)$, $R_s', R_p' \geq 0$ and a real value $\alpha\in[0,1]$. Let 
$$\bar{R}_s \triangleq \bar{R}_s'\bar{R} \text{ and } \bar{R}_p\triangleq \bar{R}_p'\bar{R}.$$ 
We define the outage probability of such a choice of parameters to be
$$P_{out}(I, \bar{R}_s', \bar{R},\alpha)=1-\sum_{K\in\mathcal{K}}\mathbb{P}_{\Phi \Psi^e\Psi}\left[(\bar{R}_s, \bar{R}_p)\in \mathcal{R}_{\Phi \Psi^e\Psi}(I) \text{ and } \bar{\alpha}(K)\geq \alpha \right],$$
where $\mathcal{R}_{\Phi \Psi^e\Psi}(I)$ denotes the region given in Corollary \ref{mmfrate},  $\mathcal{K}=\{K: 0\preceq K \preceq I, \bar{R}_s=\log \frac{|\text{SNR}K+I|}{|\text{SNR}^e\Psi^eK{\Psi^e}^{\dagger}\Phi+I|} \text{ and } \bar{R}_p= \log \frac{|(\text{SNR}+1)I|}{|\text{SNR}K+I|}\}$,and $\bar{\alpha}(K)$ is defined as in Theorem \ref{mmfrdcausal}. Note that $\mathcal{R}_{\Phi \Psi^e\Psi}(Q)$ is a random variable because the channels $P_{YZ|X}$ are now random. Under this set of parameters $(\bar{R}_s', \bar{R},\alpha)$, we can achieve distortion $\alpha\Delta+(1-\alpha) d(\bar{R}_s')$ with probability $1-P_{out}(I, \bar{R}_s', \bar{R},\alpha)$, where $d(\cdot)$ was defined in $(\ref{def_d})$. Proving the existence of a good codebook in this case is an important information theoretic problem, most recently addressed in \cite{existence}.

\section{Numerical Results} \label{numerical}
In this section, we present numerical results illustrating achievable rate distortion regions of an MMF under the two information models with a time-invariant channel. Let us consider measuring the eavesdropper's distortion using  Hamming distortion and a Bern($p$) i.i.d. source sequence. Fig. 3 shows numerical results corresponding to Theorem \ref{sufficient} and Theorem \ref{mmfrdcausal} under equal power allocation. The channels are simulated as a $4-$mode MMF with $\text{SNR}=20dB$, $\text{SNR}^e=10dB$, and $\text{MDL}=20dB$. 

In each plot, the vertical line on the right is the maximum reliable transmission rate between Alice and Bob and the vertical line on the left is the maximum perfect secrecy transmission rate that can be obtained with separate source-channel coding. The horizontal line is the maximum distortion which is also the rate distortion curve from Theorem \ref{sufficient} with no causal information at Eve. The curve obtained from Theorem \ref{mmfrdcausal} shows the tradeoff between the transmission rate between Alice and Bob and the distortion forced on Eve with causal information. We see in Fig. \ref{rdcurve}(a), $p=0.3$,  that with our source-channel coding analysis, we gain a free region for maximum distortion, as if under perfect secrecy, (from the left vertical line to the kink) because we effectively use the redundancy of the source. In Fig. \ref{rdcurve}(b) with $p=0.5$, since there is no redundancy in the source, the distortion curve drops immediately after the maximum perfect secrecy rate. Note that the transmission rates are not considered beyond the right vertical lines because they are above the maximum reliable transmission rates and Bob cannot losslesly reconstruct the source sequences.
\begin{figure}[htbp]
  \centering
  \includegraphics[width=18 cm]{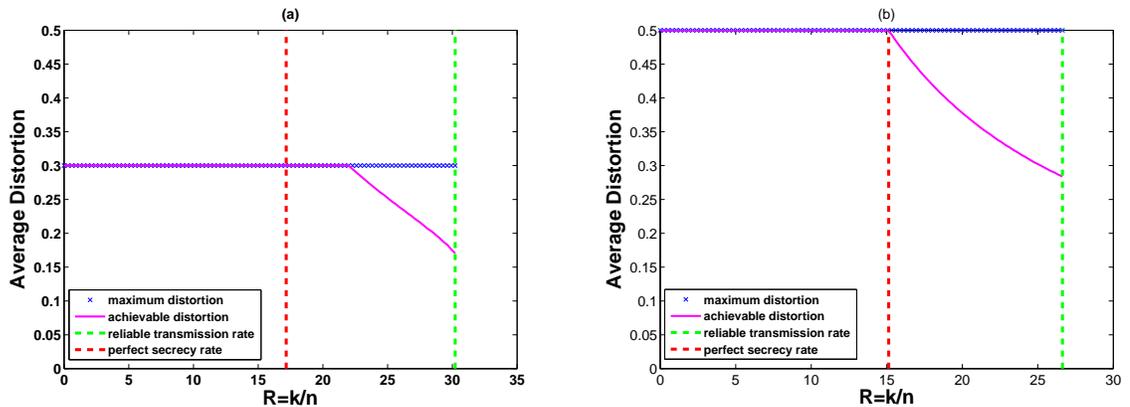}
\caption{Achievable distortion-rate curves. On the left is the Bern(0.3) i.i.d. source case and on the right is the Bern(0.5) i.i.d. source case. On the horizontal axes are the symbol/channel use source-channel coding rate and on the vertical axes are the average Hamming distortions.} 
\label{rdcurve}
\end{figure}


\section{Conclusion} \label{conclusion}
In this work, we have examined the rate-distortion-based secrecy performance of an insecure MMF communication system. The sender is assumed to have an i.i.d. source sequence which the intended receiver and the eavesdropper both try to reconstruct. Two source-channel coding models with different information availability at the eavesdropper have been considered. We have shown that, when no causal source information is disclosed to the eavesdropper, under a general broadcast channel and any distortion measure, it is possible to send the source at the maximum rate that guarantees lossless reconstruction at the intended receiver while keeping the distortion at the eavesdropper as high as if it only has the source prior distribution. When the past source realization is causally disclosed to the eavesdropper, we have applied the theoretical results in \cite{allerton} to the particular case of an MMF channel. Numerical results for an i.i.d. Bernoulli source and Hamming distortion have been provided. 

Only the theoretical formulation is given to calculate the secrecy outage probability for random MMF under equal full power allocation $Q=I$. The optimality of this power strategy and the statistics for different sets of parameters given in Section \ref{mmf-random} remain an open problem. Moreover, in our model, it is required that the intended receiver reconstruct the source losslessly. In a more general setting, one can allow lossy reconstruction of the source at the intended receiver, which is an interesting problem for further research. 

\newpage
\appendices
\section{Proof of Lemma \ref{lem1}} \label{applem1}
Given $(m_s,m_p)$, suppose there exists $\theta_{m_p}$ such that 
\begin{align}
||P_{Z^n|M_p=m_p, M_s=m_s}-\theta_{m_p}||_{TV}&\leq \epsilon_n \label{suff}
\end{align}
where $\epsilon_n=2^{-n\beta}$ for some $\beta>0$.
Then we have the following:
\begin{eqnarray}
&&||P_{Z^n|M_p=m_p}-\theta_{m_p}||_{TV}\notag\\
&=&\sum_{z^n}\left|P_{Z^n|M_p=m_p}(z^n)-\theta_{m_p}(z^n)\right|\\
&=&\sum_{z^n}\left| \sum_{m_s}P_{M_s|M_p=m_p}(m_s)P_{Z^n|M_p=m_p,M_s=m_s}(z^n)-\sum_{m_s}P_{M_s|M_p=m_p}(m_s)\theta_{m_p}(z^n)\right| \notag\\
&=&\sum_{z^n}\left|\sum_{m_s}\frac{1}{|\mathcal{M}_s|}P_{Z^n|M_p=m_p,M_s=m_s}(z^n)-\sum_{m_s}\frac{1}{|\mathcal{M}_s|}\theta_{m_p}(z^n)\right|\notag\\
&\leq& \sum_{z^n}\sum_{m_s}\frac{1}{|\mathcal{M}_s|}\left|P_{Z^n|M_p=m_p,M_s=m_s}(z^n)-\theta_{m_p}(z^n)\right| \label{tri}\\
&=&\sum_{m_s}\frac{1}{|\mathcal{M}_s|}\sum_{z^n}\left|P_{Z^n|M_p=m_p,M_s=m_s}(z^n)-\theta_{m_p}(z^n)\right|\notag\\
&\leq& \sum_{m_s} \frac{1}{|\mathcal{M}_s|}\epsilon_n \label{tv}\\
&=&\epsilon_n
\end{eqnarray}
where $(\ref{tri})$ follows from triangle inequality and $(\ref{tv})$ follows from $(\ref{suff})$.

\begin{eqnarray*}
&&||P_{Z^n|M_p=m_p}P_{M_s|M_p=m_p}-P_{Z^nM_s|M_p=m_p}||_{TV}\\
&=&\sum_{z^n}\sum_{m_s}\left|P_{Z^n|M_p=m_p}(z^n)P_{M_s|M_p=m_p}(m_s)-P_{Z^n|M_p=m_p,M_s=m_s}(z^n)P_{M_s|M_p=m_p}(m_s)\right|\\
&=&\frac{1}{|\mathcal{M}_s|}\sum_{z^n}\sum_{m_s}\left|P_{z^n|M_p=m_p}(z^n)-P_{Z^n|M_p=m_p,M_s=m_s}(z^n)\right|\\
&=&\frac{1}{|\mathcal{M}_s|}\sum_{z^n}\sum_{m_s}\left|P_{Z^n|M_p=m_p}(z^n)-\theta_{m_p}(z^n)+\theta_{m_p}(z^n)-P_{Z^n|M_p=m_p,M_s=m_s}(z^n)\right|\\
&\leq& \frac{1}{|\mathcal{M}_s|} \sum_{z^n}\sum_{m_s}(\left|P_{Z^n|M_p=m_p}(z^n)-\theta_{m_p}(z^n)\right|+\left|P_{Z^n|M_p=m_p,M_s=m_s}(z^n)-\theta_{m_p}(z^n)\right|)\\
&=&\frac{1}{|\mathcal{M}_s|}\sum_{m_s}(\sum_{z^n}\left|P_{Z^n|M_p=m_p}(z^n)-\theta_{m_p}(z^n)\right|+\sum_{z^n}\left|P_{Z^n|M_p=m_p,M_s=m_s}(z^n)-\theta_{m_p}(z^n)\right|)\\
&\leq& \frac{1}{|\mathcal{M}_s|}\sum_{m_s}(\epsilon_n+\epsilon_n)\\
&=&2\epsilon_n
\end{eqnarray*}
By applying Lemma \ref{lem2}, we have
\begin{eqnarray}
I(M_s;Z^n|M_p)&=&\sum_{m_p}P_{M_p}(m_p)I(M_s;Z^n|M_p=m_p)\notag\\
&\leq& \sum_{m_p}P_{M_p}(m_p) (-2\epsilon_n\log \frac{2\epsilon_n}{|\mathcal{M}_s|})\notag\\
&\leq& 2\cdot2^{-n\beta}(nR_s) \label{zero}
\end{eqnarray}
where $(\ref{zero})$ goes to $0$ as $n\rightarrow \infty$.

\section{Proof of $(\ref{resultfin})$} \label{detail}
For each $i$, we have
\begin{eqnarray}
I(S_i;Z^n|M_p)&\leq& I(M_sS_i;Z^n|M_p)\notag\\
&=& I(M_s;Z^n|M_p)+ I(S_i;Z^n|M_sM_p)\notag\\
&\leq& \epsilon \label{strongs}
\end{eqnarray}
for large enough $n$. $(\ref{strongs})$ follows from strong secrecy of the channel and Fano's inequality. Note that weak secrecy is not sufficient to give us the  desired result in our proof. We now define
$$P_i\triangleq P_{S_iZ^nM_p}$$
$$\bar{P}_i\triangleq P_{M_p}P_{S_i|M_p}P_{Z^n|M_p}$$
i.e. $\bar{P}_i$ is the Markov chain $S_i\inout M_p\inout Z^n$.  By Pinsker's inequality, 
\begin{eqnarray}
||P_i-\bar{P}_i||_{TV}&\leq&\frac{1}{\sqrt{2}} D(P_i||\bar{P}_i)^{\frac12}\notag\\
&=&\frac{1}{\sqrt{2}} I(S_i;Z^n|M_p)^{\frac12}\notag\\
&\leq& \sqrt{\frac{\epsilon}{2}} \label{totalv}
\end{eqnarray}
\begin{eqnarray}
\min_{t(i,z^n)}\mathbb{E}[d(S_i,t(i,Z^n))]&\geq& \min_{t(i,z^n,m_p)}\mathbb{E}[d(S_i,t(i,Z^n,M_p))]\notag\\
&\geq&\min_{t(i,z^n,m_p)}\mathbb{E}_{\bar{P}_i}[d(S_i,t(i,Z^n,M_p))]-\delta^\prime(\epsilon)\label{ptopbar}\\
&=&\min_{t(i,m_p)}\mathbb{E}_{\bar{P}_i}[d(S_i,t(i,M_p))]-\delta^\prime(\epsilon)\label{mkv}\\
&\geq&\min_{t(i,m_p)}\mathbb{E}[d(S_i,t(i,M_p))]-2\delta^\prime(\epsilon)\label{pbartop}
\end{eqnarray}
where $(\ref{ptopbar})$ and $(\ref{pbartop})$ use the fact that $P_i$ and $\bar{P}_i$ are close in total variation from $(\ref{totalv})$; $(\ref{mkv})$ uses the Markov relation $S_i\inout M_p\inout Z^n$ of distribution $\bar{P}_i$. The technical details can be found in Lemma 2 and 3 from \cite{allerton}. Averaging over $k$, we obtain $(\ref{resultfin})$.

\section{Justification of the condition $\max_{(R_s,R_p)\in\mathcal{R}}R_s>0$} \label{just}
From Theorem \ref{region} or \ref{thmstrong}, we have 
$$\max_{(R_s,R_p)\in\mathcal{R}}R_s>0$$ 
is equivalent to  
\begin{eqnarray}
I(W;Y|V)-I(W;Z|V)&>&0 \label{unreduced}
\end{eqnarray}
for some $V\inout W\inout X \inout YZ$. We claim this can be simplified to 
\begin{eqnarray}
I(W;Y)-I(W;Z)&>&0 \label{reduced}
\end{eqnarray}
for some $W\inout X\inout YZ$.

To see $(\ref{reduced})\Rightarrow (\ref{unreduced})$, we can simply let $V=\o$.
To see $(\ref{unreduced})\Rightarrow (\ref{reduced})$, observe that if there exists $V\inout W\inout X\inout YZ$ such that $(\ref{unreduced})$ holds, then there has to exist at least one value $v$ such that 
$I(W;Y|V=v)-I(W;Z|V=v)>0$. We can redefine the distribution as $P_{\bar{W}\bar{X}\bar{Y}\bar{Z}}\triangleq P_{WXYZ|V=v}$. It can be verified that the Markovity $\bar{W}\inout \bar{X} \inout \bar{Y}\bar{Z}$ holds and $P_{\bar{Y}\bar{Z}|\bar{X}}=P_{YZ|X}$.

\section{Sufficient condition on Theorem \ref{sufficient}} \label{app-suff}
From Theorem \ref{nocausal} and Corollary \ref{mmfrate}, we know that a sufficient condition for the eavesdropper's channel not being less noisy than the intended receiver's channel is 
\begin{equation}
\max_{K\in \mathcal{H}^{M\times M}, 0\preceq K\preceq I} \frac{|\text{SNR}K+I|}{|\text{SNR}^e\Psi^eK{\Psi^e}^{\dagger}\Phi+I|}>1. \label{gencond}
\end{equation}
However, $(\ref{gencond})$ is computationally heavy to verify. If we restrict $K$ to be of the form $K={\Psi^e}^{\dagger}\Lambda\Psi^e$ where $\Lambda$ is diagonal with diagonal entries $\lambda_i\in[0,1]$, then $(\ref{gencond})$ has a much simpler form:
\begin{equation}
\frac{\prod_{i=1}^M(1+\text{SNR}\lambda_i)}{\prod_{i=1}^M(1+\text{SNR}^e\lambda_i\bar{\phi}_i)}>1 \label{redcond}.
\end{equation}
Therefore, if there exists a $j\in\{1,...,M\}$ such that $\bar{\phi}_j<\frac{\text{SNR}}{\text{SNR}^e}$, we can choose $\lambda_j=1$ and $\lambda_i=0$ for $i\neq j$ to satisfy $(\ref{redcond})$.

\section*{Acknowledgment}

The authors would like to thank Dr. Peter Winzer from Bell Labs, Alcatel Lucent, Dr. Matthieu Bloch and Dr. Rafael Schaefer for fruitful discussions and great support on this project.

\ifCLASSOPTIONcaptionsoff
  \newpage
\fi

\bibliographystyle{ieeetr}

\bibliography{tcom_submission_final}

\end{document}